\newtheorem{theorem}{Theorem}
\newtheorem{lemma}{Lemma}
\newtheorem*{notation*}{Notation}
\newtheorem{corollary}{Corollary}
\theoremstyle{definition}
\newtheorem{assump}{Assumption}
\newcommand{\norm}[1]{\left\lVert#1\right\rVert}
\DeclareMathOperator{\E}{\mathbb{E}}
\DeclareMathOperator{\Prob}{\mathbb{P}}
\newcommand{\indep}{\perp \!\!\! \perp}
\title{Sequential Monte Carlo for Cut-Bayesian Posterior Computation \footnote{LA-UR: 23-31546; JM and GG contributed equally to this work; correspondence to: ggopalan@lanl.gov.}}
\author[1,2]{Joseph Mathews}
\author[2]{Giri Gopalan}
\author[2]{James Gattiker}
\author[2]{Sean Smith}
\author[2]{Devin Francom}
\affil[1]{Department of Statistical Science, Duke University, Durham, NC 27708 USA}
\affil[2]{Los Alamos National Laboratory, Los Alamos, NM 87545 USA}
\date{October 2024}
\newcommand{\X}{\mathcal{X}}
\newcommand{\y}{\mathbf{y}}
\newcommand{\h}{\theta}
\begin{document}
\onehalfspacing

\maketitle

\abstract{We propose a sequential Monte Carlo (SMC) method to efficiently and accurately compute \textit{cut-Bayesian posterior} quantities of interest, variations of standard Bayesian approaches constructed primarily to account for model misspecification. We prove finite sample concentration bounds for estimators derived from the proposed method and apply these results to a realistic setting where a computer model is misspecified. Two theoretically justified variations are presented for making the sequential Monte Carlo estimator more computationally efficient, based on linear tempering and finding suitable permutations of initial parameter draws. We then illustrate the SMC method for inference in a modular chemical reactor example that includes submodels for reaction kinetics, turbulence, mass transfer, and diffusion. The samples obtained are commensurate with a direct-sampling approach that consists of running multiple Markov chains, with computational efficiency gains using the SMC method. Overall, the SMC method presented yields a novel, rigorous approach to computing with cut-Bayesian posterior distributions.}

\section{Introduction}\label{section: intro}
Models of complex physical systems are often constructed via the coupling of multiple submodels where each represents a distinct, salient process. Submodels may entail their own experimental data, physical parameters, and a simulator of the subprocess being represented, all of which can be used for statistical modeling and inference. For example, Figure \ref{fig:reactor-parts} illustrates the structure of an ethylene-oxide reactor model consisting of coupled submodels where each submodel corresponds to a relevant physical subprocess, such as turbulence, reaction kinetics, mass transfer, and diffusion. Standard Bayesian inference, especially with the aid of hierarchical models, is a powerful way to perform uncertainty quantification within physical systems such as the ethylene-oxide reactor illustrated. Nonetheless, modern Bayesian statistics has seen the development of cut-Bayesian posteriors for modular inference \citep{liu2009, plummer2015cuts, jacob2017better, carmona2020semi, yu2021variational, frazier2022cutting}, variations of standard approaches meant primarily to dampen the pernicious effects of model misspecification. Despite many recent developments in modular inference, computation can be a major impediment for the use of cut-Bayesian posteriors and is an active area of research. Our main objective is to introduce novel sequential Monte Carlo (SMC) methods for efficiently computing with cut-Bayesian posterior distributions along with finite-sample theoretical results that underpin the concentration of resultant estimators.

\begin{figure}
    \centering
    \includegraphics[width=0.6\textwidth]{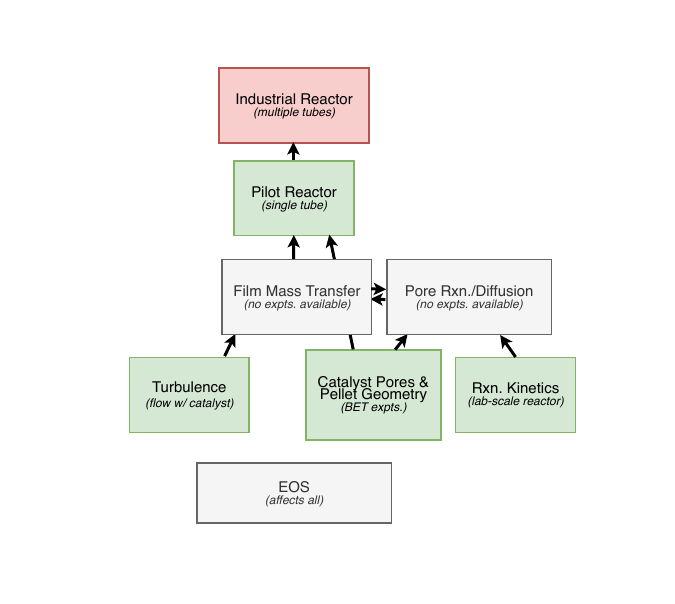}
    \caption{Components of a prototypical ethylene-oxide reactor model intended for demonstrating modular Bayesian inference, including the SMC method for cut-Bayes computation that is the subject of this paper. Green coloring indicates submodels for which experimental data exists, whereas gray corresponds to submodels that lack data. The top red model depicts the integration of multiple reactor tubes into a new design for which pure prediction would be required.}
    \label{fig:reactor-parts}
\end{figure}
While we will precisely define what is meant by a cut-Bayesian posterior distribution in the following section, the essential idea of this modeling strategy is to fix distributions over some subset of the important parameters, or to only use some subset of the available experimental data and submodels to infer a distribution over said parameters. Subsequently, the associated uncertainty of these fixed distribution parameters is integrated out through a mixture when performing standard Bayesian inference over the remaining parameters of interest. Throughout, we will refer to the parameters with a fixed distribution as \textit{cut parameters} and their corresponding fixed distribution as the \textit{cut distribution}.

While our focus is computation, we provide some of the major arguments that have been made in favor of cut-Bayesian posteriors for context. Perhaps the most commonly cited justification is that of model misspecification; some submodels could be poorly specified, and it makes sense to use data only from the well-specified submodels in inferring parameters; this approach has sometimes also been referred to as \textit{feedback cutting}. Second, it is common practice to use a plug-in point estimate for one or more parameters when there are many parameters to infer, as a way to mitigate complexity of the problem. However, the plug-in method inherently ignores uncertainty of the fixed parameter – instead, fixing a distribution over the parameter via cut Bayes provides a viable alternative to a full Bayesian approach that incorporates uncertainty for all parameters of interest. While fixing a parameter that does not provide much sensitivity could be appropriate, the same does not hold for sensitive parameters. A third, related motivation is computational complexity; for instance, a Markov chain Monte Carlo (MCMC) method may mix substantially better when working with fewer parameters than a full parameter set, for instance by avoiding non-identifiability. These and additional motivations are explicated in the work of \cite{liu2009}, \cite{plummer2015cuts}, and \cite{carmona2020semi}, amongst others.

Despite sound justification for the use of cut-Bayes methodology, sampling from or approximating the cut-Bayesian posterior distribution can be an onerous computational difficulty. As outlined in \cite{plummer2015cuts}, the ideal standard is to run many Markov chains long enough to explore the posterior distribution conditional on draws of the cut parameters, an approach we will refer to as \textit{direct sampling}. This procedure yields exact samples from a cut-Bayesian posterior distribution in contrast to sampling from cut distributions within an MCMC routine -- what \cite{plummer2015cuts} refers to as \textit{naive cut}, which is not guaranteed to converge to the target distribution. However, running many Markov chains could take an unrealistic amount of time to complete, even in a parallel environment, given how many iterations may be needed to mix for even a single chain. As an alternative, \cite{plummer2015cuts} puts forth a heuristic approach dubbed \textit{tempered cut} which takes small steps at each iteration for the cut parameter.

Besides the work of \cite{plummer2015cuts}, there is some recent literature on computing or approximating cut-Bayesian posterior distributions. One such contribution is the unbiased MCMC method of \cite{jacob2020}, which is parallel and relies on maximal couplings of Markov chain proposals to ensure distinct chains meet. Additionally, \cite{liu_goudie} present a method for approximating the cut posterior distribution that improves on the tempered cut method of \cite{plummer2015cuts} and relies on stochastic approximation Monte Carlo \citep{liang07}. \cite{yu2021variational} and \cite{carmona2022scalable} have developed variational inference for cut-Bayesian posterior computation. \cite{balde2023nonparametric} use Gaussian-process emulation to approximate the posterior distribution conditional on cut parameters along with a linearity assumption for a coupled model.

We do not believe that the SMC methods presented are direct competitors to the methods cited. For instance, it is conceivable that variational inference methods could be more computationally efficient, particularly when a mean-field approximation is feasible, due to the avoidance of Monte Carlo sampling (which can be computationally demanding). However, there are often cases where Gaussian approximations are not meaningful and our methods may be preferred in such an instance. While we are not unique in the use of theoretical support, other treatments have relied upon asymptotics (which is quite valuable, but fundamentally different from the finite-sample approach we take). In comparison, to our knowledge, the methods presented are the first developments of finite-sample theory and application of SMC for computing with a cut-Bayesian posterior distribution. Some main features of the SMC methods presented are that they:

\begin{enumerate}
    \item yield estimators that concentrate around the true cut-Bayes posterior quantities with specific finite sample bounds,
    \item do not require Gaussian or variational approximations to the posterior distribution conditional on cut parameters, and
    \item do not assume linearity of functions, for instance, in the simulator.
\end{enumerate}

The paper is structured as follows. Section \ref{section: cut-Bayes} provides formal background for cut-Bayesian posterior distributions, SMC in general, and computationally efficient extensions for computing cut-Bayesian posterior expectations (i.e., linear tempering and permuting the ordering of cut parameter draws); Theorem \ref{thm:Main Theorem Concentration} and Corollary \ref{thm:Tempered Cut} provide finite-sample concentration results for the associated estimators of these methods. Moreover, Section \ref{section: cutBayesex} includes a detailed treatment of a specific instance of cut Bayes that could arise for a misspecified computer model. For this example, the posterior distribution of calibration parameters conditional on the cut parameters is normal with a mean prescribed by a general function of the cut parameters and data; finite sample complexity results are given for this specific case. The application of SMC to a cut-Bayesian posterior in the ethylene-oxide reactor modular example is presented in Section \ref{section: Application} as an illustration on an actual scientific model. We review our main findings and suggest future avenues for work in Section \ref{section: conclusion} and conclude. 

\section{Background, SMC Method, and Concentration}\label{section: cut-Bayes}
We begin by reviewing the definition of cut-Bayesian posteriors and SMC methodology in general. Then we set forth a specific SMC method for computing with cut-Bayesian posteriors and provide concentration results for SMC-based estimators of cut-Bayesian posterior quantities. We then show how a linear tempered variant of the SMC method can be employed, as well as how different permutations of cut parameters can be used, both of which can reduce computational burden. In the subsequent section we follow up with an example of how the theoretical results can be applied to a setting where there is model misspecification, for instance with computer models in which the conditional posterior takes a Gaussian form. 

\subsection{Cut-Bayesian Posteriors}\label{subsection: cut-Bayesian posteriors}
Consider the set of parameters $(\nu,\theta)$ where $\nu \in \X^{d_{\nu}}$ and $\h \in \X^{d}$ (typically, $\X$ is $\mathbb{R}$). Let $\y = (y_{1},\ldots,y_{n})$ and suppose
\begin{align*}
    \nu \sim p_{\nu}(\cdot) \quad\quad \y \mid \nu,\h \sim p_{y}(\cdot \mid \nu, \h).
\end{align*}
Throughout, we refer to $\nu$ as the \textit{cut parameter(s)} and $\theta$ as the \textit{calibration parameter(s)}. Often $p_{\nu}$ is presented as a posterior distribution corresponding to a separate submodel, though $p_{\nu}$ could be provided by domain experts without explicit reference to a posterior distribution or experimental data. We refer to $p_{\nu}$ as the \textit{cut distribution}. The standard Bayesian posterior is given by:
\begin{align}\label{eqn:Posterior}
    \pi(\nu,\theta \mid \mathbf{y}) = \pi(\nu \mid \mathbf{y})\pi(\theta \mid \mathbf{y}, \nu) = \frac{p_{y}(\y \mid \nu,\theta)p_{\h}(\theta \mid \nu)p_{\nu}(\nu)}{m(\mathbf{y})},
\end{align}
where $p_{\theta}$ denotes a prior distribution for $\h$ and $m(\mathbf{y})$ is the marginal distribution for $\mathbf{y}$. Typically $\h$ and $\nu$ are assumed independent apriori in which case $p_{\h}(\h \mid \nu) = p_{\h}(\h)$. When $p_{y}$ is misspecified, updating $\nu$ using $\y$ can lead to poor estimates of $\h$ \citep{liu2009, plummer2015cuts}. This is a primary reason we may choose to not update $\nu$ using $\y$, leading to the \textit{cut-Bayes posterior}:
\begin{align}\label{eqn:Cut Posterior}
    \pi^{cut}(\nu,\theta \mid \mathbf{y}) = p_{\nu}(\nu) \pi(\theta \mid \y, \nu) = \frac{ p_{y}(\y \mid \nu,\theta)p_{\h}(\theta \mid \nu) p_{\nu}(\nu)}{p_{y}(\y \mid \nu)}.
\end{align}
The distributions \eqref{eqn:Posterior} and \eqref{eqn:Cut Posterior} differ due to the difference in normalizing constants, with the term $p_{y}(\y \mid \nu)$ depending on $\nu$. The dependence of the normalizing constant on $\nu$ in the cut-Bayesian posterior prohibits sampling from \eqref{eqn:Cut Posterior} using standard MCMC methods (e.g. Metropolis-Hastings).

The direct-sampling approach from \eqref{eqn:Cut Posterior} (referred to as multiple imputation in \cite{plummer2015cuts}) is accomplished by first sampling $\nu \sim p_{\nu}$ and then sampling from the conditional distribution $\pi(\h \mid \y, \nu)$. If an MCMC method is used to sample from the conditional posterior, obtaining samples from \eqref{eqn:Cut Posterior} involves running $S$ distinct Markov chains conditional on different $\nu$ values. However, such a sampling strategy can quickly become infeasible if draws from $\pi(\h \mid \y, \nu)$ are expensive, for instance due to a long time for each Markov chain to mix. We propose an SMC method to sample from the cut-Bayesian posterior efficiently. 

\subsection{SMC}\label{subsection:SMC}
SMC methods (``particle filters") are a class of algorithms designed to sequentially sample from a sequence of distributions $\mu_{0},\ldots,\mu_{S}$. For example, in filtering problems $\mu_{s}$ may be the posterior distribution of the state of a hidden Markov model conditional on the observation process. In Bayesian inference problems, $\mu_{S}$ may be the posterior distribution of a static parameter and $\mu_{s} \propto \mu^{\beta_{s}}_{S}$ for a sequence of inverse temperatures $\beta_{0} < \beta_{1} < \ldots < \beta_{S} = 1$. Asymptotic properties of SMC algorithms have been studied extensively \citep{chopin_clt,adaptive_clt}. Finite sample properties of SMC methods were given more recently in \cite{marion}. SMC algorithms are appealing in that they provide natural estimates of normalizing constants and can be adapted to parallel computing environments easily \citep{geweke}. In addition, they have been shown to perform well even in the presence of multimodality \citep{jasra,mathews}. An SMC algorithm is initialized by first obtaining samples (called \textit{particles}) from $\mu_{0}$. Having obtained particles approximately drawn from $\mu_{s-1}$, a combination of resampling and MCMC methods are used to propagate particles forward to obtain samples approximately drawn from $\mu_{s}$. Critical to the success of SMC algorithms is the ``closeness" of adjacent distributions $\mu_{s-1}$ and $\mu_{s}$, and defining a sequence of distributions that enables efficient sampling can be challenging. 

\subsection{Cut-Bayes SMC Method}\label{subsection:cut-Bayes SMC}
Here, $\mu_{s} = \pi(\h \mid \y, \nu_{s})$, where $\nu_{s}$ is a sample from the cut distribution $p_{\nu}$. Consequently, $\mu_{s}$ is itself random, but conditional on cut parameter draws, the SMC algorithm discussed below does not consider the randomness of $\nu_{s}$.  Our approach is to leverage concentration properties of $\nu_{0:S}$ so that particles approximately drawn from $\pi(\h \mid \y, \nu_{i})$ provide a close approximation of $\pi(\h \mid \y, \nu_{j})$ for $i \neq j$. In the following, we give precise conditions under which this SMC algorithm provides an efficient stochastic approximation method for estimating expectations under $\pi^{cut}$.

We now introduce the SMC cut-Bayes method. Assume that we can obtain
\begin{align}\label{eqn:Sample Nu}
    \nu_{0:S} := (\nu_{0},\ldots,\nu_{S}) \overset{iid}{\sim} p_{\nu}.
\end{align}
We note that in practice \eqref{eqn:Sample Nu} may only be ``approximately" achieved by running a Markov chain targeting $p_{\nu}$ for a sufficient length of time. In our application (Section \ref{section: Application}), $p_{\nu}$ is a known distribution (e.g. uniform) where \eqref{eqn:Sample Nu} can be achieved exactly. The goal is to efficiently sample from $\mu_{0},\ldots,\mu_{S}$, where $\mu_{s}(\h) := \pi(\theta \mid \y, \nu_{s})$. Write $\mu_{s}(\theta) = q_{s}(\theta) / Z_{s}$, where $Z_{s}$ denotes the normalizing constant of $\mu_{s}(\h)$. Denote the \textit{importance weights} $w_{s}(\h) := q_{s}(\h) / q_{s-1}(\h)$ and define their corresponding empirical average as
\begin{align*}
    \bar{w}_{s}(\theta^{1}_{s-1},\ldots, \h^{N}_{s-1}) = \bar{w}_{s} := \frac{1}{N}\sum^{N}_{i=1} w_{s}(\theta^{i}_{s-1}).
\end{align*}
Conditional on $\nu_{0:S}$, the algorithm we consider is a standard SMC algorithm: 
\begin{enumerate}
    \item Sample $\nu_{0:S} \overset{iid}{\sim} p_{\nu}$
    \item Sample $\theta^{1:N}_{0} := (\h^{1}_{0},\ldots,\h_{0}^{N}) \overset{iid}{\sim} \mu_{0}$
    \item For $s=1,\ldots,S$:
    \begin{enumerate}
        \item Sample $\tilde{\h}^{1:N}_{s} := (\tilde{\h}^{1}_{s},\ldots,\tilde{\h}^{N}_{s})$, where
        \begin{align*}
            \tilde{\h}^{i}_{s} = \h^{j}_{s-1} \text{ with probability } \frac{w_{s}(\theta^{j}_{s-1})}{N\bar{w}_{s}}
        \end{align*}
        \item Sample $\theta^{1:N}_{s} := (\h^{1}_{s},\ldots,\h^{N}_{s})$, where
        \begin{align*}
            \h^{i}_{s} \sim K^{t}_{s}(\tilde{\h}^{i}_{s}, \cdot),
        \end{align*}
    \end{enumerate}
\end{enumerate}
where $K_{1},K_{2},\ldots,K_{S}$ are a set of Markov kernels assumed to be ergodic and $K_{s}$ is $\mu_{s}$-invariant. The estimate of $\E_{\pi^{cut}}[g]$ is given by
\begin{align}\label{eqn:Estimator}
    \hat{g}_{\text{SMC}} := \frac{1}{S+1}\sum^{S}_{s=0} \left(\frac{1}{N}\sum^{N}_{i=1} g(\nu_{s},\theta^{i}_{s})\right),
\end{align}
for a measurable function $g(\h,\nu)$. The following theorems give conditions under which $\hat{g}_{\text{SMC}}$ concentrates around $\E_{\pi^{cut}}[g]$ for $g$ such that $|g| \leq 1$. Before stating these, we first fix some notation. Going forward, we let $\Prob$ and $\E$ generically denote the probability measure with respect to the particle system and $\nu_{0:S}$ (see Appendix~\ref{app:A} for details). Let $\mathcal{P}_{M}(\mu_{s})$ with $M > 0$ be the set of all probability measures over $\X^{d}$ such that
\begin{align*}
    \sup_{B \subset \X^d} (\eta(B) / \mu_{s}(B)) \leq M, \text{ for } \eta \in \mathcal{P}_{M}(\mu_{s}).
\end{align*}
A distribution $\eta \in \mathcal{P}_{M}(\mu_{s})$ is said to be \textit{M-warm} with respect to $\mu_{s}$. Define the mixing time
\begin{align*}
    \tau_{s}(\epsilon, M) := \min\; \left\{t : \sup_{\eta \in \mathcal{P}_M(\mu_{s})} \|\eta K_{s}^t - \mu_{s}\|_{\text{TV}}\leq \epsilon \right\},
\end{align*}
where $\norm{\cdot}_{\text{TV}}$ denotes the total variation norm. Warm mixing times are commonly studied in the mixing time literature \citep{vempala_geometric_random_walks, yuansi}. Roughly speaking, $\eta \in \mathcal{P}_{M}(\mu_{s})$ implies $\eta$ is ``close" to $\mu_{s}$ in that their density functions are within a constant factor of each other. \cite{marion} showed that $\mathcal{L}(\tilde{\h}^{i}_{s})$ is a $2$-warm start with respect to $\mu_{s}$ for appropriate choices of $N$ and $t$ ($\mathcal{L}$ refers to the probability law of a random variable). However, obtaining this warm start requires $N$ to grow linearly in $\max_{s \in \{1,\ldots,S\}}\chi^{2}(\mu_{s} \mid \mid \mu_{s-1})$, where
\begin{align*}
    \chi^{2}(\mu_{s} \mid \mid \mu_{s-1}) := \int_{\mathcal{X}}\left(\frac{\mu_{s}(dx)}{\mu_{s-1}(dx)} - 1 \right)^{2} \mu_{s-1}(dx), 
\end{align*}
denotes the $\chi^{2}$-divergence between $\mu_{s}$ and $\mu_{s-1}$.
Here, we have less control on this quantity since the sequence $\mu_{0},\ldots,\mu_{S}$ is indexed by $\nu_{0:S}$ and is therefore defined randomly at the beginning of the algorithm. For this reason, we make the following assumption: 
\begin{assump}\label{assump:One}
There exist $\mathcal{E}_{\alpha} > 1$ and $\delta_{\alpha} \in (0,1)$ such that 
\begin{align*}
    \Prob\left(\max_{s \in \{1,\ldots,S\}}\chi^{2}(\mu_{s} \mid \mid \mu_{s-1}) \geq \mathcal{E}_{\alpha} - 1\right) \leq \delta_{\alpha}.
\end{align*}
\end{assump}
The following theorem states conditions under which $\hat{g}_{SMC}$ approximates expectations under $\pi^{cut}$.
\begin{theorem}\label{thm:Main Theorem Concentration}
Suppose $\nu_{0:S} \overset{iid}{\sim} p_{\nu}(\cdot)$ and $\theta^{1:N}_{0} \overset{iid}{\sim} \pi(\cdot \mid \y, \nu_{0})$.  Let $\delta \in (0,\frac{1}{4})$, $\epsilon \in (0,1)$, and set
\begin{enumerate}
    \item $N \geq \log\left(\frac{6(S+1)}{\delta} \right) \max\left\{18 \mathcal{E}_{\alpha}, \frac{2}{\epsilon^{2}} \right\}$
    \item $t \geq \max_{s=1,\ldots,S} \tau_{s}(\frac{\delta}{6N(S+1)}, 2)$
    \item $S \geq \frac{2}{\epsilon^{2}}\log\left(\frac{6}{\delta} \right)$
\end{enumerate}

Then under Assumption~\ref{assump:One}, 
\begin{align*}
    \Prob\left(\left|\hat{g}_{\text{SMC}} - \E_{\pi^{\text{cut}}}[g] \right| \leq \epsilon \right) \geq 1 - \delta - \delta_{\alpha}, \text{ for } |g| \leq 1.
\end{align*}
\end{theorem}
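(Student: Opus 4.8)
The plan is to split the error $\hat g_{\text{SMC}} - \E_{\pi^{\text{cut}}}[g]$ into a Monte Carlo error coming from the iid draws $\nu_{0:S}$ and a particle-approximation error coming from the fact that the $\theta^{i}_{s}$ are only approximately $\mu_{s}$-distributed. Writing $\bar g_{s} := \int g(\nu_{s},\theta)\,\mu_{s}(d\theta)$ for the exact conditional expectation given $\nu_{s}$ and $G_{s} := \tfrac1N\sum_{i=1}^{N} g(\nu_{s},\theta^{i}_{s})$ for the $s$-th block average, so that $\hat g_{\text{SMC}} = \tfrac1{S+1}\sum_{s=0}^{S} G_{s}$, I would use the decomposition
\begin{align*}
  \hat g_{\text{SMC}} - \E_{\pi^{\text{cut}}}[g]
  \;=\; \underbrace{\frac{1}{S+1}\sum_{s=0}^{S}\big(G_{s} - \bar g_{s}\big)}_{\text{particle error}}
  \;+\; \underbrace{\frac{1}{S+1}\sum_{s=0}^{S}\big(\bar g_{s} - \E_{\pi^{\text{cut}}}[g]\big)}_{\nu\text{-averaging error}},
\end{align*}
and aim to bound the $\nu$-averaging error by (essentially) $\epsilon$ on an event of probability at least $1-\delta/3$ and to show the particle error is negligible relative to $\epsilon$ on an event of probability at least $1-2\delta/3-\delta_{\alpha}$; a union bound then yields the stated conclusion.

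The $\nu$-averaging term is the easy---and rate-determining---contribution. Since $\bar g_{s} = \phi(\nu_{s})$ with $\phi(\nu) := \int g(\nu,\theta)\,\pi(\theta\mid\y,\nu)\,d\theta$ and $\nu_{0:S}\overset{iid}{\sim} p_{\nu}$, the random variables $\bar g_{0},\dots,\bar g_{S}$ are iid and bounded in $[-1,1]$, and the disintegration $\pi^{\text{cut}}(d\nu,d\theta) = p_{\nu}(d\nu)\,\pi(d\theta\mid\y,\nu)$ from \eqref{eqn:Cut Posterior} gives $\E[\bar g_{s}] = \E_{\pi^{\text{cut}}}[g]$. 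Hoeffding's inequality over the $S+1$ iid terms, together with the assumed lower bound on $S$, then bounds this term by $\epsilon$ with probability at least $1-\delta/3$.

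The particle error is the substantive part, and it is where the hypotheses on $N$ and $t$ are used. I would first condition on $\nu_{0:S}$ and restrict to the good event of Assumption~\ref{assump:One}, on which $\max_{s}\chi^{2}(\mu_{s}\|\mu_{s-1})\le\mathcal E_{\alpha}-1$; this costs at most $\delta_{\alpha}$ in probability and, crucially, turns the random bridging sequence $\mu_{0:S}$ into one with uniformly bounded successive $\chi^{2}$-divergence, so that the fixed-sequence finite-sample SMC analysis of \cite{marion} applies with $\mathcal E_{\alpha}-1$ in the role of the worst-case divergence. I would then induct on $s$: the base case is exact because $\theta^{1:N}_{0}\overset{iid}{\sim}\mu_{0}$; for the inductive step, the lower bound $N\gtrsim\mathcal E_{\alpha}\log(6(S+1)/\delta)$ keeps the relative variance $\chi^{2}(\mu_{s}\|\mu_{s-1})$ of the importance weights small relative to $N$, which is exactly the condition under which the marginal law of a resampled particle $\tilde\theta^{i}_{s}$ is $2$-warm with respect to $\mu_{s}$, and applying $t\ge\tau_{s}(\delta/(6N(S+1)),2)$ steps of the $\mu_{s}$-invariant kernel $K_{s}$ brings the marginal law of $\theta^{i}_{s}$ within total-variation distance $\delta/(6N(S+1))$ of $\mu_{s}$; this closes the induction and gives $|\E[G_{s}\mid\nu_{0:S}] - \bar g_{s}|\le\delta/(3N(S+1))$, so the conditional mean $\E[\hat g_{\text{SMC}}\mid\nu_{0:S}]$ lies within $O(\delta/N)$ of $\tfrac1{S+1}\sum_{s}\bar g_{s}$, which is negligible against $\epsilon$ by the lower bound on $N$. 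It remains to control the fluctuation of $\hat g_{\text{SMC}}$ about this conditional mean. For this I would decompose each $G_{s} - \bar g_{s}$ into pieces that are conditionally mean-zero given the particle history through step $s-1$---the within-block sampling error, which is sub-Gaussian with proxy $O(1/N)$ since its $N$ summands are conditionally independent, and the self-normalized importance-sampling error, which is mean-zero to leading order by the weight identity $\E_{\mu_{s-1}}[w_{s}(\theta)(f(\theta)-\mu_{s}(f))]=0$ for bounded $f$---plus a deterministic $O(1/N)$ bias; assembling the mean-zero parts into a martingale along the particle filtration refined by $\sigma(\nu_{0:S})$ and applying Azuma's inequality, the $1/(S+1)$ normalization together with $N\gtrsim\epsilon^{-2}\log(6(S+1)/\delta)$ makes the whole particle error much smaller than $\epsilon$ with probability at least $1-\delta/3$.

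I expect the main obstacle to be this particle-error step---specifically, establishing and then propagating the $2$-warmness of the resampled particles through all $S$ mutation steps when the bridging sequence $\mu_{0:S}$ is itself random, since the standard SMC bounds of \cite{marion} are stated for a fixed sequence with controlled divergence. Assumption~\ref{assump:One} is precisely the device that makes this tractable: conditioning on the event that every successive $\chi^{2}$-divergence is at most $\mathcal E_{\alpha}-1$ decouples the geometry of the bridging sequence from the particle randomness. A secondary but genuinely fiddly point is the bookkeeping of the nested conditioning---on $\nu_{0:S}$, on the good event, and on the particle history---so that the conditional independence used in the Azuma step is legitimately available and the several failure probabilities sum to no more than $\delta+\delta_{\alpha}$.
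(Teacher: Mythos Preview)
Your overall decomposition into a $\nu$-averaging error and a particle error is exactly what the paper does, and your treatment of the $\nu$-averaging term via Hoeffding on the iid $\bar g_{s}=\phi(\nu_{s})$ is identical to the paper's. The divergence is entirely in how the particle error is handled, and here your sketch has a real gap.

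The paper does \emph{not} control the particle error by a bias-plus-Azuma martingale argument. Instead, following \cite{marion}, it builds, for every $s$ and every $i$, a random variable $\bar\theta^{i}_{s}\sim\mu_{s}$ via a maximal-coupling construction so that (i) $\bar\theta^{1:N}_{s}$ are \emph{exactly} iid from $\mu_{s}$ and independent of the past, and (ii) on the joint event $\bigcap_{s} C_{s}$, one has $\theta^{1:N}_{s}=\bar\theta^{1:N}_{s}$ for all $s$. The inductive $2$-warmness argument you cite is used to show $\Prob(\bigcap_{s}C_{s}\mid \max_{s}\chi^{2}(\mu_{s}\|\mu_{s-1})\le\mathcal E_{\alpha}-1)\ge 1-\delta/3$; then Hoeffding is applied directly to the \emph{genuinely iid} $\bar\theta$-particles to get $|N^{-1}\sum_{i}g(\nu_{s},\bar\theta^{i}_{s})-\bar g_{s}|\le\epsilon/2$ simultaneously for all $s$, and the result is transferred back to the $\theta$-particles on the coupling event. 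The particle error is thus bounded by $\epsilon/2$, not made ``negligible''.

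Your route tries to avoid the coupling by extracting a marginal bias bound $|\E[G_{s}\mid\nu_{0:S}]-\bar g_{s}|\le\delta/(6N(S+1))$ from TV closeness and then running Azuma along the particle filtration. The difficulty is that the $2$-warm mixing-time bound you invoke applies to the \emph{marginal} law of $\tilde\theta^{i}_{s}$ conditional on the good event $C_{0:s-1}$, not to its conditional law given the full particle history (which is a point mass and never warm). Consequently you do not get $|\E[G_{s}\mid\mathcal F_{s-1}]-\bar g_{s}|$ small, which is what the Azuma decomposition needs. Your ``self-normalized importance-sampling error'' piece relies on the identity $\E_{\mu_{s-1}}[w_{s}(f-\mu_{s}(f))]=0$, but applying it requires $\theta^{1:N}_{s-1}$ to be (close to) iid from $\mu_{s-1}$; marginal TV closeness of each coordinate does not give this, because the resampled particles can be highly dependent. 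The coupling is precisely the device that converts ``each $\theta^{i}_{s}$ has marginal law TV-close to $\mu_{s}$'' into ``the whole vector $\theta^{1:N}_{s}$ equals an iid-$\mu_{s}$ vector on a high-probability event'', and without it the fluctuation control does not close as written.
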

The proof of Theorem~\ref{thm:Main Theorem Concentration} is given in Appendix~\ref{app:A}. Our approach is to apply known finite sample results for SMC samplers given in \cite{marion}. The results of \cite{marion} apply immediately upon conditioning on $\nu_{0:S}$ since in this case our algorithm is identical to the one considered in their results. The key difference is that here $\mu_{0},\ldots,\mu_{S}$ are randomly generated at the beginning of the algorithm. The resulting cost to ensure \eqref{eqn:Estimator} concentrates around $\E_{\pi^{cut}}[g]$ is that we now stipulate $S = \mathcal{O}(\epsilon^{-2})$. This is because the particles at a given step of the algorithm provide an estimate of the conditional expectation $\E[g(\h, \nu_{s}) \mid \nu_{s}]$ and $S = \mathcal{O}(\epsilon^{-2})$ ensures that these conditional expectations concentrate around their mean $\E_{\pi^{cut}}[g]$, based on Hoeffding's inequality.

For the concentration bound of Theorem~\ref{thm:Main Theorem Concentration} to be useful, it is important  to ensure that $\mathcal{E}_{\alpha}$ is not very large for choice of $\delta_{\alpha}$, since the number of particles, $N$, is linear in  $\mathcal{E}_{\alpha}$. Depending on the form of the conditional posterior distribution $\mu_s$ and the choice of $S$, it is possible that a small $\mathcal{E}_{\alpha}$ is achievable for a given $\delta_{\alpha}$; i.e., the conditional posteriors tend not to be far apart for the $S+1$ cut parameters sampled. However, a variation of the above, which we call \textit{tempered cut-Bayes SMC}, can be used to reduce $\mathcal{E}_{\alpha}$ for a given $\delta_{\alpha}$, with essentially the same concentration bounds and slight modification. We next introduce this variant to aid in cases where the $S+1$ conditional posterior distributions tend to be far apart (in the $\chi^2$ distance sense). 

\subsection{Tempered Cut-Bayes SMC Method}\label{subsection: temperedSMC}
The proposed tempered cut-Bayes SMC method is related to linear tempering presented in Section 5 of \cite{plummer2015cuts}, but the heuristic of \cite{plummer2015cuts} differs by using MCMC instead of SMC. Without essential loss of generality, assume that the cut parameters are elements of $\mathbb{R}^{d_{\nu}}$ (see \cite{plummer2015cuts} for the case of discrete cut parameters). Consider the same set up as in Section \ref{subsection:cut-Bayes SMC} except augment the cut parameter sequence by connecting two consecutive and independently drawn cut parameter draws, $\nu_{s}$ and $\nu_{s+1}$, with a straight line and adding $P$ evenly spaced cut parameters in between. Consider performing SMC on this newly constructed sequence of $(P+1)S+1$ conditional posterior distributions, which presumably will tend to be closer together in $\chi^2$ distance than the original $S+1$ conditional posteriors, achieving a smaller $\mathcal{E}_{\alpha}$ for a given $\delta_{\alpha}$. If we replace $(P+1)S$ for $S$ in the choice of $N$ and $t$ in Theorem 1, then, by the proof of Theorem 1 in the Appendix, we will have a coupling of particles with high probability for all of the conditional posteriors, and thus high probability of coupling for the conditional posteriors indexed only by the $S+1$ independently drawn cut parameters. Hence, we may compute the estimator as in Equation 4, retaining only the particles at $S+1$ cut parameter draws, and the proof follows as for Theorem 1.

To summarize, without essential loss of generality assume that cut parameters are elements of $\mathbb{R}^{d_{\nu}}$, and construct a new sequence $\nu^*_{0:(P+1)S}$ that augments $\nu_{0:S}$ with $P$ equally spaced points along the line connecting consecutive points $\nu_{s}$ and $\nu_{s+1}$ for $s \in \{0,...,S-1\}$. Index the conditional posteriors corresponding to $\nu^*_{0:(P+1)S}$ as $\mu^*_s(.)$. Then, our main assumption is essentially the same as Theorem 1:
\begin{assump}\label{assump:Two}
There exist $\mathcal{E}^*_{\alpha} > 1$ and $\delta_{\alpha} \in (0,1)$ such that 
\begin{align*}
    \Prob\left(\max_{s \in \{1,\ldots,S(P+1)\}}\chi^{2}(\mu^*_{s} \mid \mid \mu^*_{s-1}) \geq \mathcal{E}^*_{\alpha} - 1\right) \leq \delta_{\alpha}.
\end{align*}
\end{assump}

\begin{corollary}\label{thm:Tempered Cut}
Suppose $\nu_{0:S} \overset{iid}{\sim} p_{\nu}(\cdot)$ and $\theta^{1:N}_{0} \overset{iid}{\sim} \pi(\cdot \mid \y, \nu_{0})$. Construct a new sequence $\nu^*_{0:(P+1)S}$ that augments $\nu_{0:S}$ with $P$ equally spaced points along the line connecting consecutive points $\nu_{s}$ and $\nu_{s+1}$ for $s \in \{0,...,S-1\}$. Let $\delta \in (0,\frac{1}{4})$, $\epsilon \in (0,1)$, and set
\begin{enumerate}
    \item $N \geq \log\left(\frac{6[(P+1)S+1]}{\delta} \right) \max\left\{18 \mathcal{E}^*_{\alpha}, \frac{2}{\epsilon^{2}} \right\}$
    \item $t \geq \max_{s=1,\ldots,S(P+1)} \tau_{s}(\frac{\delta}{6N[(P+1)S+1]}, 2)$
    \item $S \geq \frac{2}{\epsilon^{2}}\log\left(\frac{6}{\delta} \right)$
\end{enumerate}

Then under Assumption~\ref{assump:Two},
\begin{align*}
    \Prob\left(\left|\hat{g}_{\text{SMC}} - \E_{\pi^{\text{cut}}}[g] \right| \leq \epsilon \right) \geq 1 - \delta - \delta_{\alpha}, \text{ for } |g| \leq 1.
\end{align*}
\end{corollary}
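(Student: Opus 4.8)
The plan is to reduce Corollary~\ref{thm:Tempered Cut} to Theorem~\ref{thm:Main Theorem Concentration}. The key observation is that, once $\nu_{0:S}$ has been drawn, the augmented sequence $\nu^*_{0:(P+1)S}$ is a deterministic function of $\nu_{0:S}$, so conditional on $\nu_{0:S}$ the algorithm run on the $(P+1)S+1$ targets $\mu^*_0,\dots,\mu^*_{(P+1)S}$ is exactly the standard SMC sampler of Section~\ref{sec: cut-Bayes SMC}, to which the finite-sample bounds of \cite{marion} apply verbatim. First I would condition on $\nu_{0:S}$ and on the event of Assumption~\ref{assump:Two}, which has probability at least $1-\delta_\alpha$ and on which $\max_{1\le s\le(P+1)S}\chi^{2}(\mu^*_{s}\mid\mid\mu^*_{s-1})\le\mathcal{E}^*_\alpha-1$. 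On that event the prescribed $N$ and $t$ are precisely the quantities appearing in Theorem~\ref{thm:Main Theorem Concentration} with $S$ replaced throughout by $(P+1)S$ and $\mathcal{E}_\alpha$ by $\mathcal{E}^*_\alpha$, so running the same finite-sample SMC argument used there (the warm-start bounds of \cite{marion}, followed by a union bound, now over the $(P+1)S+1$ time indices rather than $S+1$) shows that, with failure probability bounded as in that proof by a fixed fraction of $\delta$, the particle averages $\tfrac1N\sum_{i=1}^N g(\nu^*_s,\theta^i_s)$ lie within that proof's tolerance of the conditional expectations $\E_{\mu^*_s}[g(\nu^*_s,\cdot)]$ simultaneously for all $s\in\{0,\dots,(P+1)S\}$. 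In particular this holds at the $S+1$ indices $s=k(P+1)$, $k=0,\dots,S$, whose cut parameter is the original draw $\nu_k$ and whose particles target $\mu^*_{k(P+1)}=\pi(\cdot\mid\y,\nu_k)$.

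Next I would discard all particle systems except those at the indices $k(P+1)$, so that the estimator computed as in~\eqref{eqn:Estimator} is $\hat{g}_{\text{SMC}}=\tfrac1{S+1}\sum_{k=0}^S\tfrac1N\sum_{i=1}^N g(\nu_k,\theta^i_{k(P+1)})$, and then repeat the final step of the proof of Theorem~\ref{thm:Main Theorem Concentration} without change: because $\E_{\pi^{cut}}[g]=\E_{\nu\sim p_\nu}\E_{\pi(\cdot\mid\y,\nu)}[g(\nu,\cdot)]$ and the $S+1$ conditional expectations $h_k:=\E_{\pi(\cdot\mid\y,\nu_k)}[g(\nu_k,\cdot)]$ are i.i.d.\ and bounded in $[-1,1]$ with mean $\E_{\pi^{cut}}[g]$, Hoeffding's inequality combined with $S\ge\tfrac2{\epsilon^2}\log(6/\delta)$ makes $\tfrac1{S+1}\sum_k h_k$ within the prescribed tolerance of $\E_{\pi^{cut}}[g]$ with failure probability bounded by the complementary fraction of $\delta$. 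A triangle inequality chaining the particle averages, the $h_k$, and $\E_{\pi^{cut}}[g]$, together with a union bound over the particle-step failure event, the Hoeffding failure event, and the complement of Assumption~\ref{assump:Two}, then gives $\Prob(|\hat{g}_{\text{SMC}}-\E_{\pi^{cut}}[g]|\le\epsilon)\ge1-\delta-\delta_\alpha$.

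The step I expect to require the most care is the independence bookkeeping, which is the only substantive departure from Theorem~\ref{thm:Main Theorem Concentration}: the augmented cut parameters $\nu^*_{0:(P+1)S}$ are \emph{not} i.i.d., since the interpolated points are deterministic convex combinations of consecutive original draws, so Hoeffding cannot be applied to the full augmented list of conditional expectations. The construction is arranged precisely so that this is unnecessary: the \cite{marion} half of the argument uses only the $\chi^{2}$ control of Assumption~\ref{assump:Two} and the $\mu^*_s$-invariance of the kernels, neither of which needs independence, while the Hoeffding half is applied only to the $S+1$ subsampled indices, which inherit the i.i.d.\ structure of $\nu_{0:S}$; this is also why $(P+1)S+1$ enters only the logarithmic factors of the conditions on $N$ and $t$ whereas the condition on $S$ is unchanged. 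A minor point, subsumed by the standing reduction to $\nu\in\mathbb{R}^{d_\nu}$, is that an interpolated $\nu^*_s$ need not lie in the support of $p_\nu$; this causes no trouble because \cite{marion} treats the $\mu^*_s$ as abstract target distributions equipped with invariant kernels and pairwise $\chi^{2}$ divergences, not as genuine posteriors.
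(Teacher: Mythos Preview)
Your proposal is correct and follows essentially the same approach as the paper: the paper's argument for Corollary~\ref{thm:Tempered Cut} (given informally in Section~2.4 rather than as a separate proof in the appendix) is precisely to replace $S$ by $(P+1)S$ in the choice of $N$ and $t$, invoke the coupling argument from the proof of Theorem~\ref{thm:Main Theorem Concentration} on the augmented sequence to obtain coupling at all $(P+1)S+1$ steps, retain only the particles at the $S+1$ independently drawn cut parameters for the estimator, and then proceed exactly as in Theorem~\ref{thm:Main Theorem Concentration}. Your discussion of the independence bookkeeping---that Hoeffding is applied only at the subsampled indices while the \cite{marion} portion needs only the $\chi^2$ control of Assumption~\ref{assump:Two}---is more explicit than the paper's treatment but matches its logic exactly.
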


Note that $\hat{g}$ is defined the same as in Theorem 1, i.e., it uses only the particles at the $S+1$ independently drawn cut parameters, not including the $P$ points between consecutive draws. The possible advantage in Corollary 1 is that $\mathcal{E}^*_{\alpha}$ can be made smaller than $\mathcal{E}_{\alpha}$ for a given $\delta_{\alpha}$, leading to a smaller $N$, with potentially larger $t$; the exact tradeoff will depend on the Markov kernel used, and Section \ref{section: cutBayesex} shows an example where the increase in $t$ is not substantial for appropriate choice of Markov kernel. In addition, the computational cost of additional resampling and mutation steps must be weighed against reduction in the number of particles, $N$. We next illustrate how permuting the initial cut-parameter draws can help reduce computational complexity via the proven finite-concentration bounds. %We next illustrate the applicability of the cut-Bayes SMC method and its linear tempered variant in the following section, in the context of a realistic computer model scenario where the computer model is misspecified. 

\subsection{Permuting SMC Sequences}\label{subsection: permutedSMC}
The proof of concentration bounds in Theorem 1 given in Appendix \ref{app:A} uses the independence assumption of $\nu_{0:S}$ in order to apply Hoeffding's inequality to show that the sample mean of conditional posterior means, $\frac{1}{S+1}\sum_{s=0}^S \mu_g(\nu_s)$, concentrates around the cut posterior mean with high probability. However, the specific ordering and independence of $\nu_{0:S}$ are not needed for the remainder of the arguments in Appendix \ref{app:A}, and thus one can permute the initial IID samples drawn, $\nu_{0:S}$, for a new sequence of $S+1$ conditional posteriors. Specifically, the bounds of Appendix \ref{app:A} hold for any permutation $\sigma$ from the symmetric group on $S$ points that yields the sequence $\nu_{0}, \nu_{\sigma(1)}..., \nu_{\sigma(S)}$. With respect to the concentration bounds of Theorem 1, an advantage of permuting the cut parameter draws is evident in a smaller value of $\mathcal{E}_{\alpha}$ for specific value of $\delta_{\alpha}$, which thus reduces the required values of $N$ and $t$ in the concentration bounds. In other words, permuting the cut parameters such that successive conditional posteriors are closer together requires a smaller number of particles and Markov chain mutate steps. Hence, permuting the sequence of cut parameters provides another means of reducing computational complexity of the SMC method in addition to linear tempering. There are a number of ways to permute the cut parameters, and we consider framing this problem as a variant of the travelling salesman problem. 

\paragraph{Travelling Salesman Problem} Finding an efficient sequence of conditional posteriors to tour through can be considered a variant of the travelling salesman problem. Consider a complete graph where the nodes are $\nu_{0:S}$ and edge weights are distances between the cut parameters. If analytically tractable one should use chi-squared distance between the conditional posteriors defined by cut parameters or else some proxy such as Euclidean distance between cut parameters (e.g., see example in Section \ref{section: cutBayesex} with multivariate normal distributions). Solving the travelling salesman problem -- finding a tour through all nodes that visits each node exactly once and minimizes the total distance travelled -- should avoid excessively large jumps in cut parameter space. While the travelling salesman problem is NP-hard, approximate heuristics are available as implementations in the \texttt{R} package \texttt{TSP} \citep{TSPref}. Since we want to start at a designated point and not return, we include a dummy node that has infinite distance to all other nodes besides the start and 0 distance to the start, and remove the dummy node to get an optimal path through cut parameters beginning at the designated start node. 

We illustrate the application to cut parameter draws in Figure \ref{fig:TSP_ex}. An independent and identically drawn sample of 25 hypothetical cut parameters is generated from a bivariate normal cut distribution. On the right is the random path generated by connecting successive randomly drawn points. On the left is the estimated shortest path found with the \texttt{TSP} \texttt{R} package \citep{TSPref} that visits each point exactly once (i.e., a Hamiltonian path). The Hamiltonian path has a maximum distance between successive nodes that is 4.73, whereas the random path has a maximum distance between successive nodes that is 12.5; thus, estimated shortest Hamiltonian path is bound to perform better for SMC than the random one. (We focus here on the maximum distance between successive nodes because the concentration bounds we have introduced depend on the maximum distance between successive conditional posterior distributions.)

To better understand the distribution of maximum consecutive distance over the 25 cut parameters, we resampled 25 cut parameters 1000 times. In the case of no permutation, the proportion of resamplings such that the maximum consecutive distance exceeded 10 was .857, but in the case of permutation only .003 of the resamplings exceeded 10. We repeated this procedure with a 10-dimensional cut-parameter space (instead of 2) and similar advantages remain -- .482 of resamplings without permutation had a maximum consecutive distance exceeding 10, whereas .025 of the resamplings with permutation had a maximum consecutive distance exceeding 10. These results suggest that permutation of cut parameters via the travelling salesman approach can help achieve tighter concentration bounds. 

\begin{figure}
    \centering
    \includegraphics[width=0.8\textwidth]{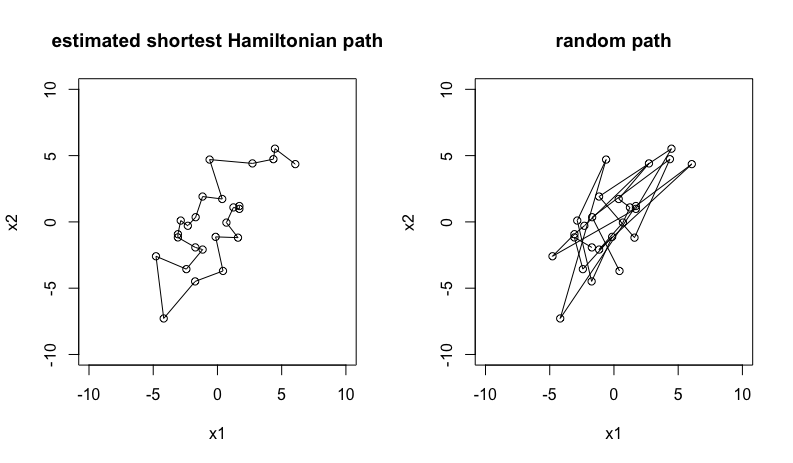}
    \caption{Left plot illustrates the estimated shortest Hamiltonian path, with a maximum successive node distance of 4.73, whereas the maximum successive node distance in the random path (right) is 12.5. There are 25 points drawn from a bivariate normal distribution with positive correlation.}
    \label{fig:TSP_ex}
\end{figure}

\section{SMC Cut Bayes for Computer Model Misspecification}\label{section: cutBayesex}
A common setting where model misspecification arises is in computer models, for instance in the Bayesian calibration literature \citep{kennedy2001bayesian, brynjarsdottir2014learning, higdon2008computer}. We consider an example where a cut-Bayes approach can be employed. Consider the set of parameters $(\nu,\theta)$ where $\nu \in \mathbb{R}^{d_{\nu}}$ and $\h \in \mathbb{R}^{d}$. Let $\y \in \mathbb{R}^{d}$. The notation $\mathcal{N}_{d}(\cdot, \cdot)$ indicates a multivariate normal density where the first argument is the mean and second is the covariance matrix. 
The model is specified as follows:
\begin{align*}
\y|\theta \sim \mathcal{N}_{d}(\theta, \sigma^2I),
\end{align*}
where $\y$ is a single realization, though without essential loss of generality there could be multiple, independent $\y$'s in the sample. A prior for $\theta$ conditional on $\nu$ is given by:
\begin{align*}
\theta|\nu  \sim \mathcal{N}_{d}(f(\nu), \sigma_P^2I),
\end{align*}
where $f$ is a general function $f: \mathbb{R}^{d_{\nu}} \rightarrow \mathbb{R}^{d} $, which could be thought of as the output of a computer simulator or model. Note that marginally:
\begin{align*}
\y|\nu  \sim \mathcal{N}_{d}(f(\nu), (\sigma^2+\sigma_P^2)I),
\end{align*}
which is similar to models often encountered in inverse problems (e.g. see Section 2 of \cite{stuart_teckentrup_2018}). A
problem with using the full Bayesian treatment for inference of $\theta$ and $\nu$ is that misspecification of $f(\cdot)$ could lead to poor inference of both parameters. With many independent samples of $\y$, it is possible to get more reliable estimates of $\theta$ (i.e., due to posterior asymptotics), but misspecification of $f(.)$ could still lead to biased and overly confident estimates of $\nu$ \citep{brynjarsdottir2014learning}. Hence, to mitigate the effect of misspecification of $f(\cdot)$, one can cut on $\nu$ and fix a cut distribution, $\nu \sim p_{\nu}$, based on auxiliary models and experiments that appropriately characterize the uncertainty for $\nu$. It should also be noted that a model discrepancy term could still be used in combination with a cut distribution for $\nu$, in order to account for computer model misspecification.  

Given the specification of the likelihood and prior conditional on $\nu$ above, the conditional posterior is given by 
\begin{align*}
\pi(\h \mid \y, \nu_{s}) = \mathcal{N}_{d}(\h; w\y+(1-w)f(\nu_{s}), cI_{d}),
\end{align*}
where $w = \frac{\sigma^{-2}}{(\sigma^{-2}+\sigma_P^{-2})}$ and $c = (\sigma^{-2}+\sigma_P^{-2})^{-1}$, by conjugacy. We show in the following arguments that the theorem and corollary of the previous section are directly applicable to analyzing a cut Bayes posterior in this model.

We consider the SMC cut-Bayes method when $f(\cdot)$ is a $\Delta$-Lipschitz function with respect to the Euclidean norm, i.e.,
\begin{align*}
    \norm{f(\nu) - f(\nu^{\prime})}_{2} \leq \Delta \norm{\nu - \nu^{\prime}}_{2}.
\end{align*}
Note that $\pi(\h \mid \y, \nu_{s})$ is strongly log-concave and log-smooth with condition number $\kappa = 1$, which can be checked by choosing $L$ and $m$ to both be $c$ in the definition of log-concave and log-smooth given in \cite{WuSchmidlerChen}. Hence, using the Metropolis-adjusted Langevin algorithm (MALA) for the MCMC kernel yields by \cite{WuSchmidlerChen}
\begin{align*}
    \max_{s=1,\ldots,S} \tau_{s}\left(\frac{\delta}{6N(S+1)}, 2\right) = \mathcal{O}^{*}(d^{\frac{1}{2}} ),
\end{align*}
where the notation $\mathcal{O}^{*}$ indicates the omission of terms logarithmic in $d$ and $\epsilon$.
By Theorem~\ref{thm:Main Theorem Concentration}, the proposed cut-Bayes SMC algorithm provides a randomized approximation algorithm for estimating $\E_{\pi^{cut}}[g]$ in time
\begin{align}\label{eqn: example cut-Bayes SMC}
    N S t = \mathcal{O}^{*}\left( \frac{d^{\frac{1}{2}}}{\epsilon^{2}} \max\{\mathcal{E}_{\alpha},\frac{1}{\epsilon^2}\}\right).
\end{align}

We calculate the $\chi^2$-divergence between consecutive conditional posteriors, $\mu_{s-1}$ and $\mu_{s}$, in the following. Note that the $\chi^2$-divergence between $\mu_{s-1}$ and $\mu_{s}$ is unchanged by subtracting off the common $w\y$ term in the mean. Further, for convenience, define $u(\cdot) := (1-w)f(\cdot)$. Then the $\chi^2$-divergence calculation proceeds as follows. 

\begin{align*}
   \chi^{2}(\mu_{s} \mid \mid \mu_{s-1}) + 1  &= \frac{1}{c^{d/2}(2\pi)^{d/2}} \int_{\mathbb{R}^{d}} \exp\left\{-\frac{1}{2c}(2\norm{\h - u(\nu_{s})}^{2} - \norm{\h - u(\nu_{s-1})}^{2}) \right\} d\theta\\
   &= \frac{1}{c^{d/2}(2\pi)^{d/2}} \int_{\mathbb{R}^{d}} \exp\left\{-\frac{1}{2c} \norm{\h - (2u(\nu_{s}) - u(\nu_{s-1}))}^{2} + \frac{1}{c}\norm{u(\nu_{s}) - u(\nu_{s-1})}^{2} \right\} d\theta \\
   &= \exp\left\{c^{-1}\norm{u(\nu_{s}) - u(\nu_{s-1})}^{2} \right\} \\
   &\leq \exp\left\{c^{-1}(1-w)^2\Delta^{2} \norm{\nu_{s} - \nu_{s-1}}^{2} \right\},
\end{align*}
where line 2 proceeds from line 1 by completing the square, and line 3 proceeds from line 2 by recognizing the integral of a Gaussian density. Now, note that using $P$ evenly spaced points between each pair of cut parameters reduces the distance between consecutive points to $\frac{\norm{\nu_{s} - \nu_{s-1}}}{P+1}$. Thus, the linear tempering SMC method reduces the term $\mathcal{E}_{\alpha}$ by an exponential factor, even for small $P$, yielding potential savings in $N$. Further, if the MALA algorithm is used, then, up to logarithmic factors, there is no increase in $t$ required based on the result cited previously from \cite{WuSchmidlerChen}.

It is still worthwhile to examine the behavior of the SMC cut-Bayes method without linear tempering. For example, suppose $\nu_{s}$ is sub-Gaussian with parameter $1 / \sqrt{\sigma}$, i.e.,
\begin{align*}
    \E[e^{t^{\top}(\nu_{s} - \E[\nu_{s}])}] \leq e^{\norm{t}^{2} / 2\sigma}.
\end{align*}
If $p_{\nu}$ is a posterior distribution, for example, then under suitable regularity conditions it will asymptotically resemble a normal and thus sub-Gaussian distribution. Then since $ \norm{\nu_{s} - \nu_{s-1}} \leq \norm{\nu_{s} - \E[\nu_{s}]} + \norm{\nu_{s-1} - \E[\nu_{s-1}]}$ we have by \cite{hsu}:
\begin{align}\label{eqn: Log Concave Conc. Bound}
    \Prob\left(\exp\left\{(1-w)^2{\Delta^{2}\norm{\nu_{s} - \nu_{s-1}}^{2}}\right\} > \exp\left\{\frac{4d(1-w)^2\Delta^{2}}{\sigma}\left(1 + 2\sqrt{\frac{t}{d}} + \frac{2t}{d} \right)\right\}\right) \leq 2e^{-t}.
\end{align}
By choosing $t = d$ and $\mathcal{E}_{\alpha} = e^{20(1-w)^2d\Delta^{2} / \sigma}$, we obtain by the union bound that Assumption~\ref{assump:One} holds with:
\begin{align*}
    \Prob(\max_{s \in \{1,\ldots,S\}}\chi^{2}(\mu_{s} \mid \mid \mu_{s-1}) \geq \mathcal{E}_{\alpha} - 1) \leq 2Se^{-d} = \delta_{\alpha}, 
\end{align*}
so that $\delta_{\alpha}$ appearing in Theorem~\ref{thm:Main Theorem Concentration} becomes negligible for large $d$. In this case, $\Delta = \mathcal{O}(1)$ and $\mathcal{E}_{\alpha} =\mathcal{O}(1)$ if $\sigma = \mathcal{O}(d)$. Roughly speaking, the second condition holds provided the variance of $\nu_{s}$ decays with $d$. 

\section{Chemical-Reactor Application}\label{section: Application}
Having established a theoretical basis for SMC in general cut-Bayesian posteriors and for the misspecified computer model setting with a Gaussian conditional posterior distribution, we proceed to demonstrate the SMC method and the linear tempering variant on a real scientific problem using the model of an ethylene-oxide production reactor that was previously introduced (i.e., Figure \ref{fig:reactor-parts}). We use this modular system in conjunction with experimental and simulated high-fidelity data in order to perform cut-Bayesian inference. Our main result is that both the SMC method of Theorem 1 and the linear tempering variant (Corollary 1) produce similar samples of calibration parameters to the ``gold standard" \citep{plummer2015cuts} direct-sampling method, while expending much less computational time.

The reactor model is a multi-physics system, coupling models from physics, chemistry, and engineering, as well as models of varying complexity. Production output is a steady numerical solution over a 1-dimensional spatial domain of single tube of a reactor filled with a fixed bed of catalyst pellets. The reactor concept is illustrated in Figure \ref{fig:reactor-op}.
\begin{figure}
    \centering
    \includegraphics[width=0.75\textwidth]{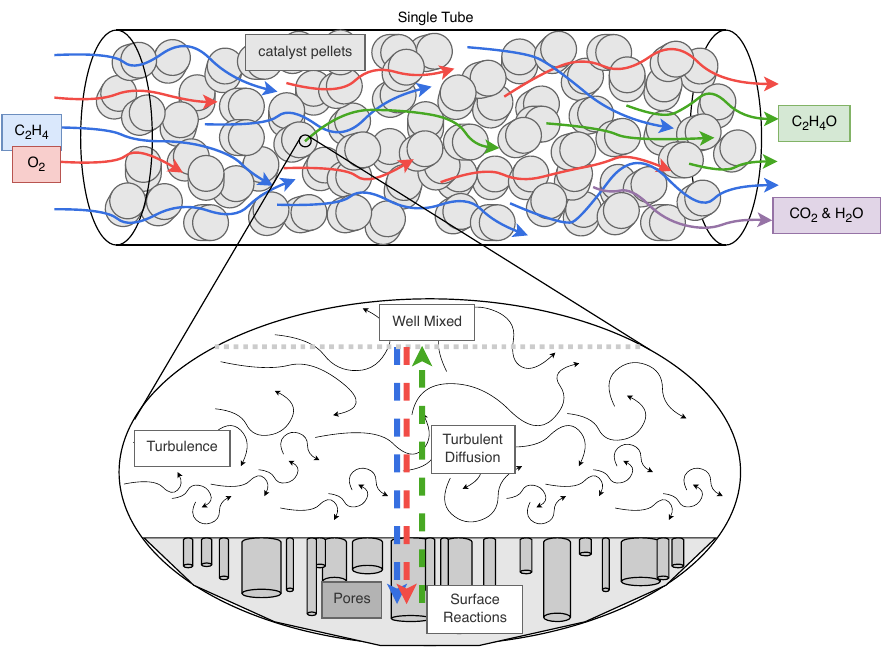}
    \caption{Representation of the catalytic reactor model for ethylene-oxide production. Input gas reactants are passed over a fixed catalytic bed, resulting in production of output products \& byproducts, varying in production efficiency. In addition to reaction parameters, the interaction parameters related to catalyst surface and flow turbulence are incorporated.}
    \label{fig:reactor-op}
\end{figure}
This model was chosen and constructed to be an extensible problem of a modular system demonstrating properties of complex system models, partitioning the integrated system into various submodels. The reactor is a common industrial process converting feedstocks ethylene and oxygen to ethylene oxide, in a reaction promoted by contact with a solid catalyst packing. The efficiency of the reactor is a function of system controls such as flow rates and temperature and of properties such as reaction rates and catalyst properties.

A brief description of the physical modeling and mathematical solution algorithm for these processes is provided.  First, at the largest scale, the complex packing of the pellets into the tube results in an uncertain void fraction, $\varepsilon$. The bulk of the gas flowing through these voids, at a given axial length down the tube, mixes quickly and is then considered to be well approximated by a single chemical composition -- this is referred to as the {\em plug-flow} approximation. This approach allows a coupled set of steady, one-dimensional differential equations (one equation for each compound) to model the change in chemical composition along the length of the reactor. The coupling among these equations is due to the chemical reaction rate converting the reactant compounds of ethylene and oxygen to either ethylene oxide or carbon dioxide \& water vapor. So next at the smallest scales, these reactions occur on the surface of the catalyst and are often limited by the available catalyst area. For this reason, highly porous catalyst pellets are preferred. The diagnostics for the properties of these pores such as the pore diameter, $D$, are trusted but are less than ideal \citep{Osterrieth2022BETreproducability}. The reaction-rate laws appear mathematically as algebraic expressions that depend on their own parameters, $C_{p1}$ and $K_{e1}$, and present the most probable point of model misspecification \citep{Klugherz1971KineticsOE, pu2019ethoxmechanisms}. This submodel for the reaction rate depends on the local chemical composition and is thus calculated separately at each discretized point along the length of the reactor. Finally at the intermediate scales, the amount of reactants at the surface is continually depleted by the reaction which also results in an increased concentration of products at the surface, which must diffuse to the bulk -- all resulting in a chemical composition local to the catalyst surface that may differ significantly from the composition in the neighboring bulk. Hence, the intermediate physical processes of pore reaction \& diffusion as well as turbulent diffusion in the millimeter-scale layer surrounding the pellet must both be modeled \citep{koning2002packedbedtransport}. The additional submodel for the turbulence around the pellet introduces two empirical parameters of $c$ and $n$. Unfortunately, the coupling of these submodels from the bulk to the surface requires an iterative multivariable algebraic solve at each discretized point along the reactor -- resulting in a significant increase of computational time.  The code for simulating this full model is implemented as a \texttt{Python} package that we utilized for demonstrating cut-Bayesian inference and is available at reference ~\cite{smith2023}. 

\subsection{Calibration and Cut Parameters}\label{subsection: params}
The parameters of the ethylene-oxide production reactor can be grouped by the various submodels included in Figure \ref{fig:reactor-parts}. For the purposes of this example, we first performed a Sobol sensitivity analysis \citep{sobol2001global} in order to select a set of parameters for which ethylene-oxide production is most sensitive. Based on this analysis and the availability of experimental data to calibrate submodels, we chose to focus on the turbulence, reaction rate, and catalyst submodels. The calibration parameters of interest are:
\begin{itemize}
\item $c$ and $n$: These parameters describe the relationship between Reynolds number of flow velocity and the turbulent Nusselt number, based on a linear model in a log-log scale.
\item $C_{p1}$ and $K_{e1}$: These are reaction rate parameters that determine the rate of ethylene-oxide production, using a rate law provided by \cite{Klugherz1971KineticsOE}. 
\end{itemize}
Further, we assume that the cut parameters come from the catalyst submodel, whose distributions are fixed based on domain expertise of catalyst properties:
\begin{itemize}
\item $\varepsilon$: Describes the gaps between pellets in the reactor. 
\item $D$: Diameter of the pores in the reactor. 
\end{itemize}
We designated these catalyst parameters as cut parameters because their uncertainties are well-characterized based on domain expertise. We also used nominal values based on domain expertise for any additional parameter values. Calibration and cut parameters are summarized in Table \ref{table:reactor_params}. For calibrated parameters, the distribution listed is the prior. For the cut parameters, the distribution is fixed and is not updated.

\begin{table}[h]
    \centering
    \begin{tabular}{|l|l|l|l|}
    \hline
    submodel & parameter & description & distribution \\
    \hline
        Turbulence  & $c$ & calibrated & U(.01,3)\\
                    & $n$ & calibrated & U(-.8,-.05) \\
        Reaction    & $C_{p1}$ & calibrated & U(0,.1) \\
                    & $K_{e1}$ & calibrated & U(0,.1) \\
        Catalyst    & $D$ & cut & U(.019, .021) \\
                    & $\varepsilon$ & cut & U(.6375, .8625) \\
    \hline
    \end{tabular}
    \caption{Parameters used for the reactor model demonstration. }
    \label{table:reactor_params}
\end{table}
We show illustrations of the turbulence submodel and reaction submodel data in Figure \ref{fig:exp_dat} below. 

\subsection{Model Specification}\label{subsection: model}
Data from actual experiments suggested by domain expertise were used for the turbulence and reaction submodels. These data were compiled from many classical experiments that are described and are available in the \texttt{Python} package from \cite{smith2023}. Examples of this data are illustrated in Figure \ref{fig:exp_dat}.

\begin{figure}
    \centering
    \includegraphics[width=0.8\textwidth]{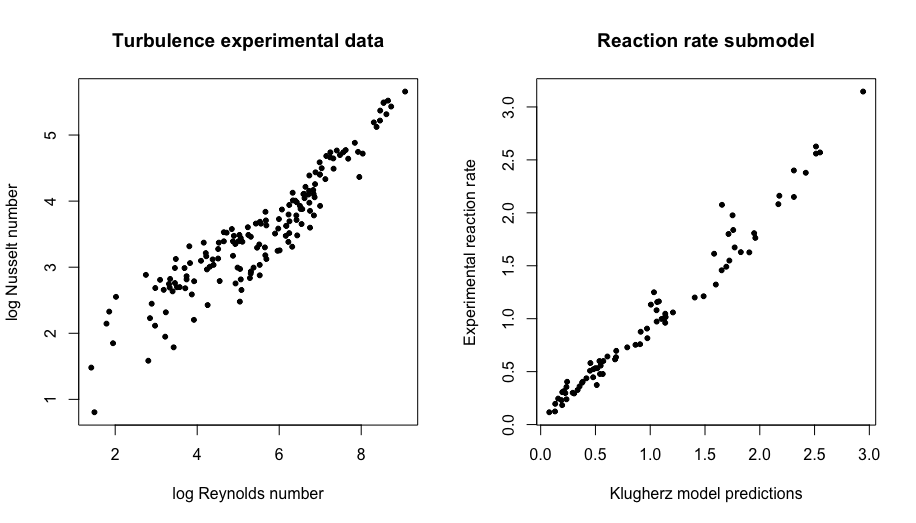}
    \caption{Left plot illustrates data from turbulence experiments, while right plot illustrates reaction rate data in comparison to the Klugherz reaction rate model evaluated at nominal parameter values from \citet{smith2023}. These data were used for the calibrations presented in Section \ref{subsection: results}. Log is the natural logarithm.}
    \label{fig:exp_dat}
\end{figure}

Let $x_1,..., x_M$ be the univariate turbulence data, $\mathbf{y}_1,..., \mathbf{y}_N$ be the bivariate experimental reaction rate data (one component for ethylene oxide and the second for carbon dioxide), and $z_1, ..., z_P$ be the univariate integrated ethylene-oxide production data generated with a high-fidelity model. Placed in a vector, we use the notation $\mathbf{x}$, $\mathbf{y}$, and $\mathbf{z}$. Further, $f_1(c,n, i)$ represents the simulator output for the turbulence submodel for the $i$th input setting, $\mathbf{f}_2(C_{p1}, K_{e1}, i)$ represents the simulator output for the reaction rate submodel at the $i$th input setting, and $f_3(c,n,C_{p1},K_{e1},\varepsilon, D, i)$ represents the simulator output for the integrated ethylene-oxide production model at the $i$th input setting. Tacitly, the $i$th input setting represents any inputs additional to the parameters that corresponded to the $i$th experimental observation; for instance, Reynolds number for the turbulence model, and temperatures and pressures for the rate kinetics model.

Conditional on the calibration and cut parameter values, we assume that each of the three data sets are independent and observed with independent, 0 mean normal errors with known variances. Estimates of the error variances for each datatype were determined with preliminary simple linear regression fits; we focus in this example solely on the inference of physical parameters. We denote $\sigma^2_1$ as the variance for the turbulence data, $\Sigma_2$ as the 2-by-2 covariance matrix for the reaction rate data, and $\sigma^2_3$ as the variance for the integrated ethylene-oxide production data.

Thus, we may write the likelihood, conditional on the calibration and cut parameters and simulators, as:
\begin{multline}
p(\mathbf{x}, \mathbf{y}, \mathbf{z} | c,n,K_{e1},C_{p1},\varepsilon,D) = \\
\prod_{i = 1}^M \mathcal{N}_{1}(x_i;f_1(c,n,i), \sigma^2_1) \times \prod_{i = 1}^N\mathcal{N}_{2}(\mathbf{y}_i;\mathbf{f}_2(C_{p1},K_{e1},i), \Sigma_2) \times \prod_{i = 1}^P\mathcal{N}_{1}(z_i;f_3(c,n,C_{p1},K_{e1},\varepsilon, D,i), \sigma^2_3).
\end{multline}
The uniform prior distributions for the calibration parameters $c, n, K_{e1}$, and $C_{p1}$  and the uniform cut distributions for $\varepsilon$ and $D$ are specified in Table \ref{table:reactor_params}.

\subsection{Results}\label{subsection: results}
We first implemented an SMC routine based off of Section \ref{subsection:cut-Bayes SMC} that uses independent draws of the cut parameter. We used 25 particles with 5 slice sampling \citep{neal_slice} mutate steps within the SMC algorithm. We grouped 10 independently drawn cut parameters in a single batch and ran 8 independent batches, and pooled all of the resampled and mutated particles from the 8 distinct batches together. To compare the resultant cut-Bayesian posterior samples of $c, n, C_{p1}$, and $K_{e1}$, we generated ``gold standard" samples with the direct-sampling method discussed in Section \ref{section: cut-Bayes}. Specifically, conditional on each draw of cut parameters we use the slice-within-Gibbs \citep{neal_slice} algorithm to draw from the conditional posterior of calibration parameters, running for a total of 1000 iterations; 1000 was chosen based on examining traceplots of preliminary runs as well as calculating the Gelman-Rubin convergence diagnostic \citep{gelman1992inference} using the implementation from the \texttt{coda} package \citep{plummer2006coda} for said runs. Pooling all of the calibration samples generated from distinct cut parameter draws results in samples from the cut posterior distribution \citep{plummer2015cuts}. In Figure \ref{fig:SMC_compare} we see a comparison of cut posterior densities for calibration parameters generated with the SMC method and the direct-sampling method (i.e., using multiple MCMC chains). We see that the densities generally match up, suggesting the SMC method is producing samples of the cut posterior. Appendix \ref{app: pairs-plot} illustrates that pairs plots of samples from the SMC and direct methods show agreement; moreover, the bivariate distributions of pairs indicate curvature that is not characteristic of a bivariate normal distribution. 

In an attempt to reduce computational runtime, we additionally implemented the linear tempering variant of Section \ref{subsection: temperedSMC} with 10 particles and 1 additional point on the line connecting successive cut parameter draws, with 5 independently drawn cut parameters per set. We also reduced the number of mutate steps by 1. We also ran a comparison with the travelling salesman permutation technique introduced in Section \ref{subsection: permutedSMC}, with 10 particles, 10 independently drawn cut parameters, and 4 mutation steps per job, using the \texttt{TSP} R package \citep{TSPref} to permute the ordering of the cut parameters in each job as described in Section \ref{subsection: permutedSMC}. The results presented are using Euclidean distance between cut parameters, though a version scaled so that the cut parameters are uniform between 0 and 1 was also compared. Very minor differences were observed in the resultant samples. The comparison of resultant samples to the the gold standard is given in the density plots of Figures \ref{fig:temper_compare} and \ref{fig:permute_compare}, also showing general agreement, and as is next discussed, indicating computational improvement in runtime. 

\begin{figure}
    \centering
    \includegraphics[width=0.8\textwidth]{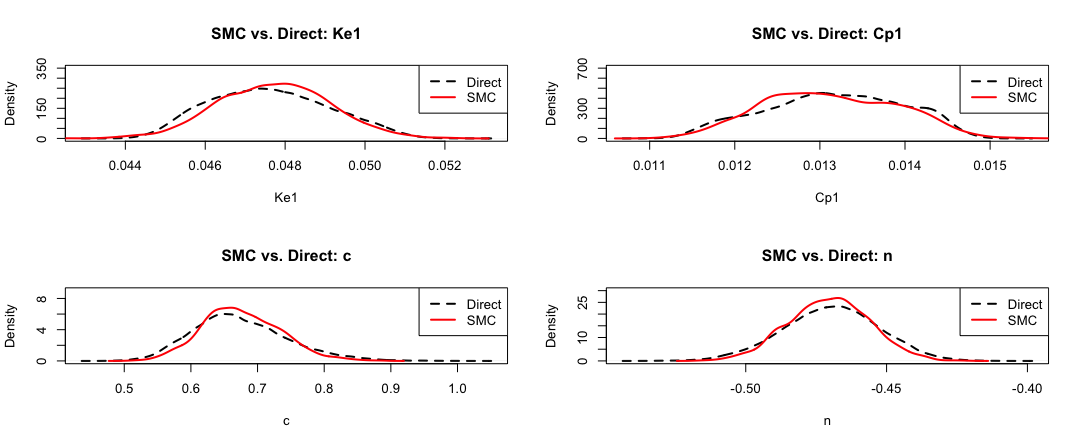}
    \caption{Comparison of SMC cut posterior samples (red density) of calibration parameters to direct-sampling (black density); the densities generally appear to be consistent. Additionally, Appendix \ref{app: pairs-plot} contains a comparison of pairs plots, which show alignment.}
    \label{fig:SMC_compare}
\end{figure}

\begin{figure}
    \centering
    \includegraphics[width=0.8\textwidth]{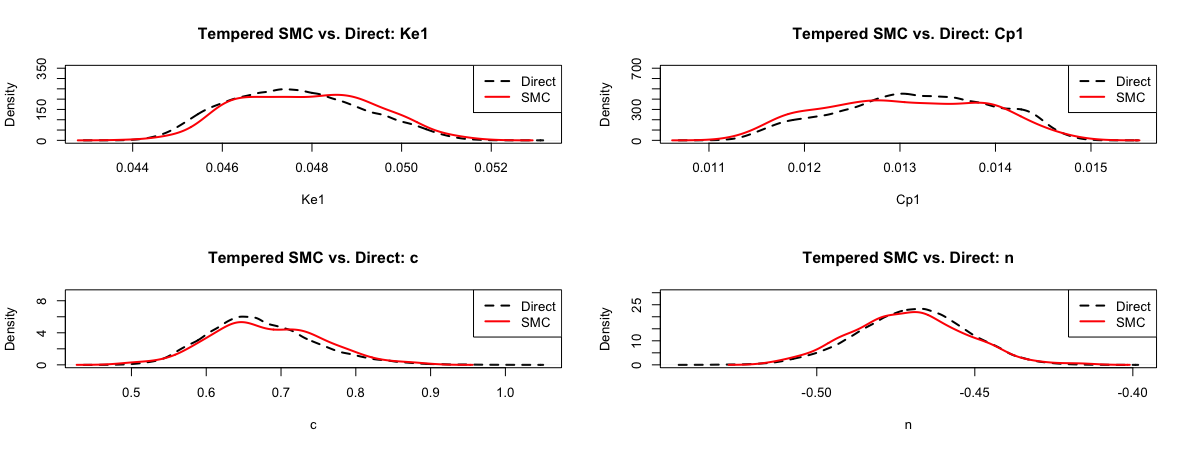}
    \caption{Comparison of tempered SMC cut posterior samples (red density) of calibration parameters to direct-sampling (black density). 10 particles are used for this run in comparison to the 25 used in Figure \ref{fig:SMC_compare}, but with $P =1$, as in Corollary 1. Samples appear to be generally consistent with direct sampling approach but with substantial computational reduction.}
    \label{fig:temper_compare}
\end{figure}

\begin{figure}
    \centering
    \includegraphics[width=0.8\textwidth]{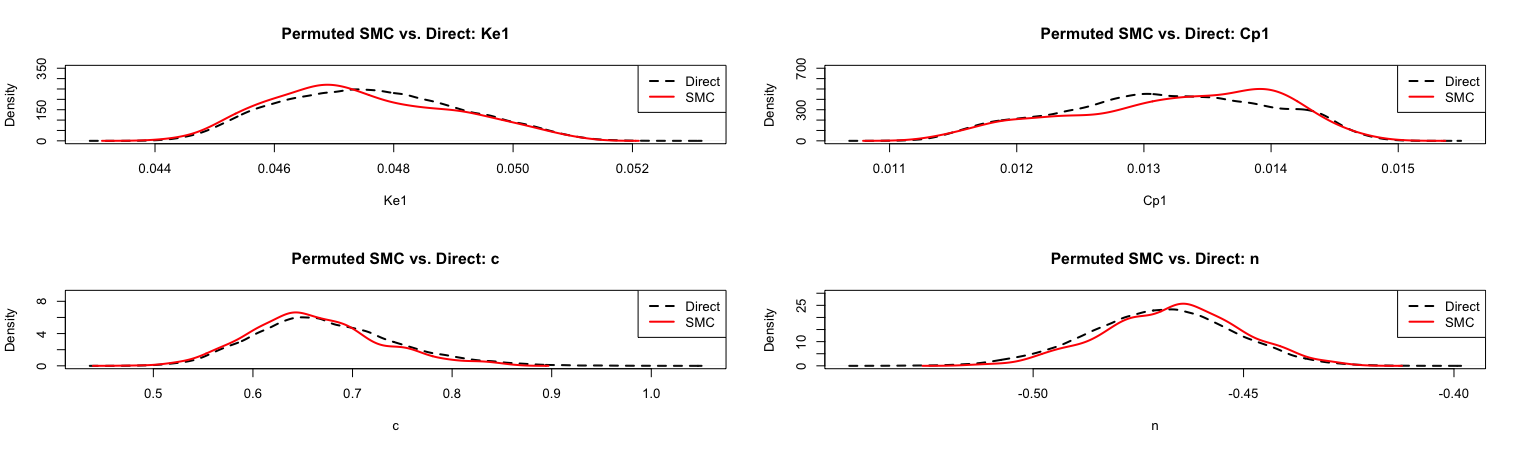}
    \caption{Comparison of permuted SMC cut posterior samples (red density) of calibration parameters to direct-sampling (black density). 10 particles are used for this run just as the tempered case, and 10 total cut parameters as in the previous two cases. Quality of SMC samples appears similar to the tempered case.}
    \label{fig:permute_compare}
\end{figure}

Both SMC procedures and the direct-sampling procedure using multiple Markov chains were run on the Los Alamos National Laboratory (LANL) Darwin computing testbed, which is funded by the Computational Systems and Software Environments subprogram of LANL's Advanced Simulation and Computing program (NNSA/DOE). All methods were batched into 8 jobs where each job was run on its own node on the general partition, but no parallelization was used beyond batching independent runs. Additionally, the BASS R package \citep{JSSv094i08} was used to construct a computationally efficient emulator for the integrated ethylene production model (referred to as $f_3$ previously) and the identical emulator was used in the SMC and the direct-sampling procedures. Run time results comparing the SMC method, the linear tempered SMC method, and the direct sampling MCMC method are shown in Table \ref{table:results}, indicating substantial improvement using the SMC methods. The overhead time needed to generate the initial particles required for the SMC methods was 1 hour and 19 minutes. 

\begin{table}[h]
    \centering
    \begin{tabular}{|l|l|l|l|}
    \hline
    Method & Min job runtime (minutes) & Max job runtime (minutes) \\
    \hline
        Direct sampling  & 645 & 765\\
        SMC    & 79 & 92 \\
        Tempered SMC   & 29  & 39 \\
        Permuted SMC & 29 & 46 \\
    \hline
    \end{tabular}
    \caption{Comparison of run times in minutes for direct sampling, SMC (Section \ref{subsection:cut-Bayes SMC}), the linear tempered SMC variant (Section \ref{subsection: temperedSMC}), and the permuted SMC variant (Section \ref{subsection: permutedSMC}) over 8 jobs run in parallel. Densities are formed from samples pooled together from 8 independently run jobs.}
     \label{table:results}
\end{table}
Some considerations should be underscored in interpreting these comparisons. First, the direct-sampling method's run time can be reduced by running the MCMC chains for less than 1000 iterations; 1000 was chosen based on a convergence diagnostic. It is nonetheless plausible that the SMC method saves time because each set of Markov transitions should ideally start from a conditional posterior distribution that is closer to the next distribution sampled from instead of starting from scratch as in the MCMC case.

\section{Conclusion and Discussion}\label{section: conclusion}
Our central contribution is the introduction of SMC methods for computing with cut-Bayesian posteriors. The methods are supported by theoretical finite-sample concentration results of SMC estimators in general settings as well as finite sample complexity bounds when the conditional posterior is normally distributed, as was motivated by a computer model misspecification problem. Additionally, we have demonstrated the practical utility and accuracy of the methods for cut-Bayes inference with a coupled modular chemical reactor system, and have shown order-of-magnitude computational efficiency gains. To our knowledge, we have presented the first provably correct SMC-based computational method for cut-Bayesian posterior inference. The defining characteristics of our method are that it results in convergent estimators and does not require approximations to conditional distributions. When such approximations are appropriate, alternate methods may be preferred for computational efficiency; however, our SMC method can be used when such approximations cannot be made (see Appendix \ref{app: non-Gauss} for an example in which the conditional posteriors are non-Gaussian). While the unbiased MCMC method of \cite{jacob2020} could be a viable alternative that does not assume approximations to conditional distributions, care must be taken to ensure that meeting times of coupled Markov chains are not prohibitively expensive, such as when the dimension of the problem increases. The results of Section \ref{section: cutBayesex} provide some theoretical guidance for how the complexity of our SMC method scales with dimension for a specific case. 

There are a number of future research directions that can build off of the work herein. The problems encountered that have motivated this work are where a full cut, instead of a partial cut, is most sensible. In other words, some experiments and models are trusted to inform the posterior of key physical parameters, and experiments from poorly specified models should not be allowed to contaminate said inferences at all. However, in some regimes, it is possible that a model is not so poorly specified that some feedback is warranted; this situation is known as semi-modular inference \citep{carmona2020semi}, and it is possible that the SMC methods developed in this paper can extend to such a setting. Future work can build off of the contributions here for inference in the semi-modular setting. Additionally, an important aspect of our theoretical results is that independence is assumed for the cut parameter draws for the application of Hoeffding's inequality; while concentration results are less prevalent in the case of dependence, future work could explore relaxing the independence assumption in order to derive tighter concentration bounds.

\section{Acknowledgments}
The authors would like to thank an Associate Editor and two Reviewers for their comments that have helped to improve the manuscript.

This work was supported by the U.S. Department of Energy through the Los Alamos National Laboratory. Los Alamos National Laboratory is operated by Triad National Security, LLC, for the National Nuclear Security Administration of U.S. Department of Energy (Contract No. 89233218CNA000001). The authors are grateful for support from the Advanced Simulation and Computing Program’s Verification and Validation subprogram.

\appendix
\section{Proof of Theorem~\ref{thm:Main Theorem Concentration}}\label{app:A}

\subsection{General Results for SMC Algorithms}\label{app:General Results}

We first prove a straightforward extension of the finite sample bounds for SMC algorithms given in \cite{marion}. Let $\mu_{0},\ldots,\mu_{S} = \pi$ denote a sequence of distributions defined on a space $\mathcal{X}$ (e.g. $\mathcal{X} = \mathbb{R}^{d}$) with common dominating measure $\rho$. The sequential Monte Carlo algorithm produces a set of particles $\Theta = (\h^{1:N}_{0},\tilde{\h}^{1:N}_{1},\h^{1:N}_{1},\ldots,\tilde{\h}^{1:N}_{S},\h_{S}^{1:N})$ jointly defined on $\mathcal{X}^{N(2S+1)}$. The marginal laws of these random variables are implicitly defined by the algorithm given in Section~\ref{subsection:cut-Bayes SMC}; we do not state these marginal laws here for brevity but refer the reader to \cite{marion} and \cite{mathews} for details. In addition, the proof technique of \cite{marion}, and the proof technique used here to extend their results, involves constructing a set of random variables denoted $\bar{\h}^{1:N}_{s}$ via a ``maximal" coupling construction at the $s$th step of the algorithm so that $\bar{\h}^{i}_{s} \sim \mu_{s}$ and $\bar{\h}^{i}_{s}$ and $\h^{i}_{s}$ are equal with high probability. The state space we consider throughout is $\mathcal{X}^{N(3S+1)}$ with corresponding product $\sigma$-field $\mathcal{B}^{N(3S+1)}$ corresponding to the particle system $\Theta$ along with the constructed random variables $\bar{\Theta} = (\bar{\h}^{1:N}_{1},\ldots,\bar{\h}^{1:N}_{S})$. We let $\Prob$ and $\E$ generically denote the joint probability measure and expectation, respectively, of  $\Theta$ and $\bar{\Theta}$ (i.e. the probability space is $(\mathcal{X}^{N(3S+1)},\mathcal{B}^{N(3S+1)},\Prob )$). 

We note that in practice $(\bar{\h}^{1:N}_{1},\ldots,\bar{\h}^{1:N}_{S})$ are not constructed during the SMC algorithm; here they are only constructed in theory for the proof technique. We refer the reader to the Appendix of \cite{marion} for details on how the construction of $\bar{\Theta}$ is performed along with some important properties of these random variables. In particular, the authors show that $\bar{\h}^{i}_{s} \indep \tilde{\h}^{i}_{s}$ and, consequently, $\bar{\h}^{1:N}_{s} \overset{iid}{\sim} \mu_{s}$ and 
\begin{align}\label{eqn: bar cond independent}
    \bar{\theta}^{1:N}_{s} \indep (\h^{1:N}_{0},\theta^{1:N}_{1},\theta^{1:N}_{2},\tilde{\theta}^{1:N}_{2}\bar{\theta}^{1:N}_{2},\ldots,\theta^{1:N}_{s-1},\tilde{\theta}^{1:N}_{s-1}\bar{\theta}^{1:N}_{s-1}) 
\end{align}
for $s=1,\ldots,S$. We use both of these properties below. As in \cite{marion}, let
\begin{align*}
    A_{s} &= \{\theta^{1:N}_{s} = \bar{\theta}^{1:N}_{s} \} \\
    B_{s} &= \{\bar{w}_{s+1}   \geq 2\E_{\mu_{s}}[w_{s+1}] / 3 \} \\
    C_{s} &= A_{s} \cap B_{s}, 
\end{align*}
where $\bar{\theta}^{1:N}_{0} := \theta^{1:N}_{0} \overset{iid}{\sim} \mu_{0}$. As in Section \ref{subsection:cut-Bayes SMC}, $\bar{w}_{s+1} := \frac{1}{N}\sum^{N}_{i=1} w_{s+1}(\theta^{i}_{s})$, where $w_{s+1}$ are importance weights. \cite{marion} show inductively that $\Prob(C_{s})$ can be made large for $s = 0,\ldots,S$ by choosing $N$ and $t$ sufficiently large. However, here there is no ``target" measure and we wish to obtain samples from all distributions $\mu_{0},\ldots,\mu_{S}$. Therefore, we extend this result to show that the \textit{joint} event $\Prob(C_{0} \cap \ldots \cap C_{s})$ holds with high probability. Clearly a bound on $\Prob(C_{s})$ implies a bound on this joint probability via the union bound. However, doing so results in a looser bound than what is necessary. Throughout we make the following assumption:
\begin{assump}\label{assump: L2}
There exists $\mathcal{E} > 0$ such that
\begin{align*}
    \max_{s \in \{1,\ldots,S\}} \chi^{2}(\mu_{s} \mid \mid \mu_{s-1}) \leq \mathcal{E} \text{ for } s = 1,\ldots,S.
\end{align*}
\end{assump}
\noindent
We slightly modify the proof given in \cite{marion} to show the following result.
\begin{lemma}\label{lemma:Finite Sample Lemma}
Let $\delta \in (0,\frac{1}{4})$ and choose
\begin{enumerate}
    \item $N \geq 18 \log\left(\frac{2(S+1)}{\delta} \right)  (\mathcal{E} + 1)$
    \item $t \geq \max_{s=1,\ldots,S} \tau_{s}(\frac{\delta}{2N(S+1)}, 2)$
\end{enumerate}
Then for $s \in \{0,\ldots,S\}$:
\begin{align*}
    \Prob(C_{0},\ldots,C_{s}) \geq \left(1-\frac{\delta}{S+1}\right)^{s+1}.
\end{align*}
\end{lemma}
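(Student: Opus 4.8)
The plan is to prove the lemma by induction on $s$, reusing the single-step finite-sample machinery of \cite{marion} at each individual step while carrying the \emph{joint} event $C_0\cap\cdots\cap C_s$ through the recursion. The device is the factorization
\begin{align*}
    \Prob(C_0,\ldots,C_s) = \Prob(C_s \mid C_0,\ldots,C_{s-1})\,\Prob(C_0,\ldots,C_{s-1}),
\end{align*}
so that, given the inductive hypothesis $\Prob(C_0,\ldots,C_{s-1})\geq(1-\tfrac{\delta}{S+1})^{s}$, the claim reduces to the one-step estimate $\Prob(C_s^c\mid C_0,\ldots,C_{s-1})\leq\tfrac{\delta}{S+1}$; multiplying these bounds gives $(1-\tfrac{\delta}{S+1})^{s+1}$ directly, which is strictly tighter than the lower bound a union bound over the $C_j$ would produce. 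It therefore suffices to bound $\Prob(A_s^c\mid C_0,\ldots,C_{s-1})$ and a suitably rewritten version of $\Prob(B_s^c\mid C_0,\ldots,C_{s-1})$ each by $\tfrac{\delta}{2(S+1)}$.

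For the base case $s=0$ the event $A_0$ has probability one, since $\bar\theta^{1:N}_0$ is defined to be $\theta^{1:N}_0\overset{iid}{\sim}\mu_0$, so only $B_0$ needs work. Writing $w_1=q_1/q_0=(Z_1/Z_0)\,d\mu_1/d\mu_0$ gives $\E_{\mu_0}[w_1]=Z_1/Z_0$ and the moment ratio $\E_{\mu_0}[w_1^2]/\E_{\mu_0}[w_1]^2=\chi^2(\mu_1\mid\mid\mu_0)+1\leq\mathcal{E}+1$ by Assumption~\ref{assump: L2}. Applying Bernstein's inequality to the i.i.d.\ nonnegative sum $\bar w_1=\tfrac{1}{N}\sum_{i=1}^N w_1(\theta^i_0)$ — using only that the centered normalized summands are bounded above by $1$ and have variance at most $\mathcal{E}$ — yields $\Prob(\bar w_1<\tfrac{2}{3}\E_{\mu_0}[w_1])\leq\exp(-N/(18(\mathcal{E}+1)))$, which is at most $\tfrac{\delta}{2(S+1)}$ exactly under hypothesis (1). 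This same inequality controls every weight event at every later step.

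For the inductive step, condition on $C_0\cap\cdots\cap C_{s-1}$. On $A_{s-1}$ we have $\theta^{1:N}_{s-1}=\bar\theta^{1:N}_{s-1}\overset{iid}{\sim}\mu_{s-1}$, and on $B_{s-1}$ the resampling probabilities $w_s(\theta^j_{s-1})/(N\bar w_s)$ are within a factor $3/2$ of $w_s(\theta^j_{s-1})/(N\,\E_{\mu_{s-1}}[w_s])$; together with the choice of $N$ in hypothesis (1) this is precisely the regime in which \cite{marion} show the resampled law $\mathcal{L}(\tilde\theta^i_s)$ is $2$-warm with respect to $\mu_s$. Running the $\mu_s$-invariant kernel $K_s$ for $t\geq\tau_s(\tfrac{\delta}{2N(S+1)},2)$ steps brings this law to within $\tfrac{\delta}{2N(S+1)}$ of $\mu_s$ in total variation, and the maximal coupling defining $\bar\theta^i_s$ gives $\Prob(\theta^i_s\neq\bar\theta^i_s\mid C_0,\ldots,C_{s-1})\leq\tfrac{\delta}{2N(S+1)}$ for each $i$; a union bound over $i=1,\ldots,N$ gives $\Prob(A_s^c\mid C_0,\ldots,C_{s-1})\leq\tfrac{\delta}{2(S+1)}$. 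For the weight event, observe that on $A_s$ the condition $B_s$ coincides with $\bar B_s:=\{\tfrac1N\sum_i w_{s+1}(\bar\theta^i_s)\geq\tfrac23\E_{\mu_s}[w_{s+1}]\}$, which depends only on the coupled variables $\bar\theta^{1:N}_s$; by \eqref{eqn: bar cond independent} these are i.i.d.\ $\mu_s$ and independent of the first $s$ steps, so $\Prob(\bar B_s^c\mid C_0,\ldots,C_{s-1})=\Prob(\bar B_s^c)$ is bounded by the same Bernstein estimate as in the base case, namely $\tfrac{\delta}{2(S+1)}$. Since $C_s^c\subseteq A_s^c\cup\bar B_s^c$, combining gives $\Prob(C_s^c\mid C_0,\ldots,C_{s-1})\leq\tfrac{\delta}{S+1}$, closing the induction.

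The main obstacle — and essentially the only genuine departure from \cite{marion}, whose bookkeeping is arranged to sample from a single target — is justifying that the warm-start property of $\mathcal{L}(\tilde\theta^i_s)$ and the per-particle coupling estimate survive conditioning on the \emph{joint} event $C_0\cap\cdots\cap C_{s-1}$ rather than on $C_{s-1}$ alone, given that $A_{s-1}$ is itself an event about the coupled variables $\bar\theta^{1:N}_{s-1}$. The resolution is careful use of the conditional independence relations recorded in \eqref{eqn: bar cond independent}: the events $A_0,\ldots,A_{s-1},B_0,\ldots,B_{s-1}$ are measurable with respect to randomness preceding step $s$, the coupled block $\bar\theta^{1:N}_s$ is an i.i.d.\ $\mu_s$ sample independent of that history, and the rewriting of $B_s$ as $\bar B_s$ above converts the only weight event that touches step $s$ into a statement about $\bar\theta^{1:N}_s$ alone; one must still check that conditioning on $C_0,\ldots,C_{s-2}$ on top of $A_{s-1}\cap B_{s-1}$ does not distort the conditional law feeding the resampling-and-mutation step beyond what the warm-start constant $2$ already absorbs. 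Once this bookkeeping is in place, the stated constants — the $18$, the factor $\tfrac23$, and the mixing tolerance $\tfrac{\delta}{2N(S+1)}$ — are exactly what drive both one-step failure probabilities below $\tfrac{\delta}{2(S+1)}$, and the induction delivers $\Prob(C_0,\ldots,C_s)\geq(1-\tfrac{\delta}{S+1})^{s+1}$.
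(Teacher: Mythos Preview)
Your proposal is correct and follows essentially the same inductive scheme as the paper: factor $\Prob(C_{0:s})=\Prob(C_s\mid C_{0:s-1})\Prob(C_{0:s-1})$, handle $A_s^c$ via the warm start and mixing time, and handle $B_s^c$ by passing to $\bar B_s$ and invoking Bernstein together with the independence in \eqref{eqn: bar cond independent}. The decomposition $C_s^c\subseteq A_s^c\cup\bar B_s^c$ you use is equivalent to the paper's $\Prob(C_s^c\mid C_{0:s-1})=\Prob(B_s^c\cap A_s\mid C_{0:s-1})+\Prob(A_s^c\mid C_{0:s-1})$.

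The one piece you leave open---what you correctly flag in your final paragraph as ``one must still check that conditioning \ldots does not distort the conditional law \ldots beyond what the warm-start constant $2$ already absorbs''---is resolved in the paper not via the conditional-independence relations of \eqref{eqn: bar cond independent}, but simply via the hypothesis $\delta\in(0,\tfrac14)$: the induction hypothesis gives $\Prob(C_{0:s-1})\ge(1-\tfrac{\delta}{S+1})^{s}\ge \tfrac34$, and it is this lower bound (not the pointwise structure of $A_{s-1}\cap B_{s-1}$) that Lemma~4.2 of \cite{marion} uses to conclude $\Prob(\tilde\theta^i_s\in B\mid C_{0:s-1})\le 2\mu_s(B)$. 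Your write-up never invokes $\delta<\tfrac14$, so make that step explicit.
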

The proof of Lemma~\ref{lemma:Finite Sample Lemma} proceeds by induction and the arguments are near identical to those given in \cite{marion}. Consequently, we will often point to results given there. The following lemma establishes the base case. 
\begin{lemma}\label{lemma:Base Case}
Let $N \geq 18 \log\left(\frac{S+1}{\delta} \right)  (\mathcal{E} + 1)$. Then $\Prob(C_{0}) \geq 1-\frac{\delta}{S+1}$.
\end{lemma}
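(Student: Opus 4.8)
The plan is to observe that the event $A_{0} = \{\theta^{1:N}_{0} = \bar{\theta}^{1:N}_{0}\}$ holds with probability one, since $\bar{\theta}^{1:N}_{0}$ is \emph{defined} to equal $\theta^{1:N}_{0}$; hence $\Prob(C_{0}) = \Prob(B_{0})$ and it suffices to show $\Prob(B_{0}^{c}) = \Prob\big(\bar{w}_{1} < \tfrac{2}{3}\E_{\mu_{0}}[w_{1}]\big) \leq \delta/(S+1)$. First I would record the relevant moments of the importance weight $w_{1}(\theta) = q_{1}(\theta)/q_{0}(\theta)$ under $\theta_{0} \sim \mu_{0}$: a one-line computation gives $\E_{\mu_{0}}[w_{1}] = Z_{1}/Z_{0}$ and, crucially, $\E_{\mu_{0}}[w_{1}^{2}] / (\E_{\mu_{0}}[w_{1}])^{2} = 1 + \chi^{2}(\mu_{1} \mid\mid \mu_{0}) \leq 1 + \mathcal{E}$, the last inequality by Assumption~\ref{assump: L2}.

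The core step is a one-sided (lower-tail) Chernoff bound for the empirical average $\bar{w}_{1} = \tfrac{1}{N}\sum_{i=1}^{N} w_{1}(\theta^{i}_{0})$ of the nonnegative, iid variables $w_{1}(\theta^{i}_{0})$. Normalizing, set $Y_{i} := w_{1}(\theta^{i}_{0}) / \E_{\mu_{0}}[w_{1}]$ so that $Y_{i} \geq 0$, $\E[Y_{i}] = 1$, and $\E[Y_{i}^{2}] \leq 1 + \mathcal{E}$. For $\lambda > 0$ the moment generating function $\E[e^{-\lambda Y_{i}}]$ is finite (indeed $\leq 1$) regardless of the tails of $w_{1}$, and the elementary inequality $e^{-x} \leq 1 - x + x^{2}/2$ for $x \geq 0$ gives $\E[e^{-\lambda Y_{i}}] \leq \exp\!\big(-\lambda + \tfrac{\lambda^{2}}{2}\E[Y_{i}^{2}]\big)$. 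A standard Markov/Chernoff argument applied to $\Prob\big(\sum_{i} Y_{i} \leq N(1-t)\big)$, followed by optimizing over $\lambda$, then yields
\[ \Prob\!\left(\bar{w}_{1} < (1-t)\,\E_{\mu_{0}}[w_{1}]\right) \;\leq\; \exp\!\left(-\frac{N t^{2}}{2(1+\mathcal{E})}\right), \qquad t \in (0,1). \]
Taking $t = 1/3$ gives $\Prob(B_{0}^{c}) \leq \exp\!\big(-N/(18(1+\mathcal{E}))\big)$, and the hypothesis $N \geq 18\log\!\big(\tfrac{S+1}{\delta}\big)(\mathcal{E}+1)$ makes this at most $\delta/(S+1)$, which is the claim.

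This is, by design, the base case of the induction used to prove Lemma~\ref{lemma:Finite Sample Lemma}, and it mirrors the corresponding step in \cite{marion} with the single adjustment that the allowed failure probability is $\delta/(S+1)$ rather than $\delta$, leaving room for the union bound over $s = 0,\ldots,S$. The one point that requires care — and the reason a crude Chebyshev bound is insufficient — is that $w_{1}$ need not be bounded, so the usual two-sided Bernstein inequality does not apply directly; the saving observation is that only the \emph{lower} tail of $\bar{w}_{1}$ enters $B_{0}$, and $e^{-\lambda Y}$ is bounded, which keeps the needed exponential moment finite and produces the $\log((S+1)/\delta)$ (rather than polynomial) dependence of $N$.
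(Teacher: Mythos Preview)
Your proof is correct and follows essentially the same approach as the paper: both reduce to $\Prob(C_0)=\Prob(B_0)$ via the definition $\bar\theta^{1:N}_0:=\theta^{1:N}_0$, and both arrive at the identical bound $\Prob(B_0^c)\leq \exp\{-N/(18(1+\chi^2(\mu_1\|\mu_0)))\}$ before invoking the hypothesis on $N$. The paper simply cites ``Bernstein's inequality'' for this step, whereas you spell out the one-sided Chernoff argument (exploiting nonnegativity of $w_1$ so that $e^{-\lambda Y}$ is bounded) and correctly flag why a vanilla two-sided Bernstein would not apply; this is a welcome elaboration but not a different route.
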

\begin{proof}
Note $\Prob(C_{0}) = \Prob(B_{0})$ since $\bar{\theta}^{1:N}_{0} := \theta^{1:N}_{0}$. By assumption, $\theta^{1:N}_{0} \overset{iid}{\sim} \mu_{0}$. Hence, by Bernstein's inequality
\begin{align*}
    \Prob(B^c_{0}) &= \Prob\left(\frac{1}{N}\sum^{N}_{i=1} w_{1}(\theta^{i}_{0})   \leq  \frac{2\E_{\mu_{0}}[w_{1}]}{3}\right) \\
    &\leq \exp\left\{ - \frac{N}{18(1 + \chi^{2}(\mu_{1} \mid \mid \mu_{0}))} \right\} \leq \frac{\delta}{S+1},
\end{align*}
where the final inequality follows by our choice of $N$ and Assumption~\ref{assump: L2}.
\end{proof}

\begin{proof}(Lemma~\ref{lemma:Finite Sample Lemma}) 
The proof proceeds by induction. We use the shorthand notation $\Prob(C_{0:s}) := \Prob(C_{0},\ldots,C_{s})$ throughout. Lemma~\ref{lemma:Base Case} establishes the base case by our choice of $N$. Now suppose the claim holds at step $s-1$:
\begin{align*}
    \Prob(C_{0:s-1}) \geq \left(1-\frac{\delta}{S+1}\right)^{s}.
\end{align*}
Since $\delta \in (0,\frac{1}{4})$, it follows that $\Prob(C_{0:s-1}) \geq \frac{3}{4}$ by the induction hypothesis. This implies (see Lemma 4.2 of \cite{marion}) 
\begin{align*}
   \Prob(\tilde{\h}^{i}_{s} \in B \mid C_{0:s-1}) \leq 2 \mu_{s}(B).
\end{align*}
Consequently, $\mathcal{L}(\tilde{\h}^{i}_{s} \mid C_{0:s-1})$ is $2$-warm start with respect to $\mu_{s}$, where $\mathcal{L}(\tilde{\h}^{i}_{s} \mid C_{0:s-1})$ denotes the law of $\tilde{\h}^{i}_{s}$ conditional on the event $C_{0:s-1}$ occurring. Hence, by a coupling argument (see Lemma 4.1 and the Appendix of \cite{marion}), we can guarantee by our choice of $t$
\begin{align*}
    \Prob(A_{s} \mid C_{0:s-1}) \geq 1-\frac{\delta}{2(S+1)}.
\end{align*}
Let
\begin{align*}
    \bar{B}_{s} = \left\{\frac{1}{N}\sum^{N}_{i=1} w_{s+1}(\bar{\theta}^{i}_{s})   \geq 2\E_{\mu_{s}}[w_{s+1}]/3  \right\}.
\end{align*}
That is, $\bar{B}_{s}$ is the same as $B_{s}$, except we replace the $\theta^{1:N}_{s}$ particles with the constructed ``target" random variables $\bar{\theta}^{1:N}_{s} \overset{iid}{\sim} \mu_{s}$. As in the proof of Lemma~\ref{lemma:Base Case}, we have by Bernstein's inequality and our choice of $N$
\begin{align*}
    \Prob(\bar{B}^{c}_{s}) \leq \frac{\delta}{2 (S+1)}.
\end{align*}
Putting it all together,
\begin{align*}
   \Prob(C^{c}_{s} \mid C_{0:s-1}) &= \Prob(B^{c}_{s} \cap A_{s} \mid C_{0:s-1}) + \Prob(A^{c}_{s} \mid C_{0:s-1}) \\
   &\leq \Prob(\bar{B}^{c}_{s} \mid C_{0:s-1}) + \Prob(A^{c}_{s} \mid C_{0:s-1}) \\
   &=  \Prob(\bar{B}^{c}_{s}) + \Prob(A^{c}_{s} \mid C_{0:s-1})  \\
   &\leq \frac{\delta}{S+1}
\end{align*}
The equality $\Prob(\bar{B}^{c}_{s} \mid C_{0:s-1}) = \Prob(\bar{B}^{c}_{s})$ follows using \eqref{eqn: bar cond independent}
(see the Appendix of \cite{marion}). The stated bound follows:
\begin{align*}
    \Prob(C_{0},\ldots,C_{s}) = \Prob(C_{s} \mid C_{0},\ldots,C_{s-1} )\Prob(C_{0},\ldots,C_{s-1}) \geq \left(1 - \frac{\delta}{S+1} \right)^{s+1}.
\end{align*}
\end{proof}

We now use Lemma~\ref{lemma:Finite Sample Lemma} to prove Theorem~\ref{thm:Main Theorem Concentration}. We consider the same probability space as before, except expanded so that $\nu_{0:S} \overset{iid}{\sim} p_{\nu}$ (assumed to also be defined on $\mathcal{X}$) are jointly defined with $\Theta$ and $\bar{\Theta}$; we again let $\Prob$ and $\E$ denote the joint probability measure and expectation, respectively. We appeal to the fact that conditional on $\nu_{0:S}$, the sequence $\mu_{0},\ldots,\mu_{S}$ is fixed and the results above apply.

\begin{proof}(Theorem~\ref{thm:Main Theorem Concentration})
We have by our choice of $N$ and $t$ along with Lemma~\ref{lemma:Finite Sample Lemma} that 
\begin{align*}
    &\Prob\left(\theta^{1:N}_{0} = \bar{\theta}^{1:N}_{0}, \ldots, \theta^{1:N}_{S} = \bar{\theta}^{1:N}_{S} \mid \max_{s \in \{1,\ldots,S\}}\chi^{2}(\mu_{s} \mid \mid \mu_{s-1}) \leq \mathcal{E}_{\alpha} - 1 \right)  \geq 1-\frac{\delta}{3}.
\end{align*}
By the law of total probability and Assumption~\ref{assump:One}
\begin{align}\label{eqn:Joint Coupling Unordered}
    \Prob(\theta^{1:N}_{0} = \bar{\theta}^{1:N}_{0}, \ldots, \theta^{1:N}_{S} = \bar{\theta}^{1:N}_{S}) \geq 1-\frac{\delta}{3} - \delta_{\alpha}.
\end{align}
Going forward we work with the $(\bar{\h}^{1:N}_{0},\ldots,\bar{\h}^{1:N}_{S})$ random variables instead of $(\h^{1:N}_{0},\ldots,\h^{1:N}_{S})$ and then appeal to \eqref{eqn:Joint Coupling Unordered} to obtain the stated claim. Let $\mu_{g}(\nu) := \E_{\pi(\theta \mid \y, \nu)}[g(\nu,\h)]$ denote the mean of $g(\nu,\h)$ with respect to the distribution $\pi(\h \mid \y, \nu)$ (note $\mu_{g}(\nu)$ is a random variable). Recall that conditional on $\nu_{0},\ldots,\nu_{S}$, the sequence $\mu_{0},\ldots,\mu_{S}$ is fixed and so $\bar{\theta}^{1:N}_{s} \mid \nu_{0:S} \overset{iid}{\sim} \pi(\cdot \mid \y, \nu_{s})$ and
\begin{align*}
    \E[g(\nu_{s},\bar{\h}^{i}_{s}) \mid \nu_{0:S} ] = \mu_{g}(\nu_{s}), \text{ for }i=1,\ldots,N.
\end{align*}
By the conditional form of Hoeffding's inequality (conditional on $\nu_{0:S}$), we have by our choice of $N$ that with probability $1 - \frac{\delta}{3}$
\begin{align}\label{eqn:first concentration}
   \left| \frac{1}{N}\sum^{N}_{i=1} g(\nu_{s},\bar{\theta}^{i}_{s}) - \mu_{g}(\nu_{s})\right| \leq \frac{\epsilon}{2}, \text{ for }s=0,\ldots,S.
\end{align}
Since \eqref{eqn:first concentration} holds jointly for $s=0,\ldots,S$, this implies that with probability $1 - \frac{\delta}{3}$ by the triangle inequality that conditional on $\nu_{0:S},$
\begin{align}\label{eqn:Theta Concentration Event}
    \left|\frac{1}{S+1}\sum^{S}_{s=0} \left(\frac{1}{N}\sum^{N}_{i=1} g(\nu_{s},\bar{\theta}^{i}_{s})\right) - \frac{1}{S+1}\sum^{S}_{s=0} \mu_{g}(\nu_{s})\right| \leq \frac{\epsilon}{2}.
\end{align}
Recall that $\nu_{0:S} \overset{iid}{\sim} p_{\nu}$ and note that the random variable $|\mu_{g}(\nu_{s})| \leq 1$ since $|g| \leq 1$. Hence, we again have by our choice of $S$ that with probability $1-\frac{\delta}{3}$ (by Hoeffding's inequality),
\begin{align}\label{eqn:Nu Concentration Event}
   \left| \frac{1}{S+1}\sum^{S}_{s=0} \mu_{g}(\nu_{s}) - \E_{p_{\nu}}[\mu_{g}(\nu)] \right| \leq \frac{\epsilon}{2}.
\end{align}
By definition, $\E_{p_{\nu}}[\mu_{g}(\nu)] = \E_{\pi^{\text{cut}}}[g]$, the expectation under the cut-Bayes posterior. Consequently, \eqref{eqn:Theta Concentration Event} and \eqref{eqn:Nu Concentration Event} along with the triangle inequality imply that with probability $1 - \frac{2\delta}{3}$
\begin{align}\label{eqn:Double Concentration}
    \left|\frac{1}{S+1}\sum^{S}_{s=0} \left(\frac{1}{N}\sum^{N}_{i=1} g(\nu_{s},\bar{\theta}^{i}_{s})\right) - \E_{\pi^{\text{cut}}}[g]\right| \leq \epsilon.
\end{align}
Now, \eqref{eqn:Double Concentration} holds only for the $(\bar{\h}^{1:N}_{0},\ldots,\bar{\h}^{1:N}_{S})$ random variables. However, by appealing to \eqref{eqn:Joint Coupling Unordered} it follows that with probability $1-\frac{\delta}{3} - \delta_{\alpha} $ the same concentration inequalities hold for the $(\h^{1:N}_{0},\ldots,\h^{1:N}_{S})$ too. Hence, by \eqref{eqn:Joint Coupling Unordered} and \eqref{eqn:Double Concentration} it follows that with probability $1 - \delta - \delta_{\alpha}$,
\begin{align*}
    \left| \hat{g} - \E_{\pi^{\text{cut}}}[g] \right| \leq \epsilon.
\end{align*}
\end{proof}

\newpage
\section{Pairs Plots for SMC and Direct Sampling in Reactor Example}\label{app: pairs-plot}
\begin{figure}[ht]
    \centering
    \includegraphics[width=0.35\textwidth]{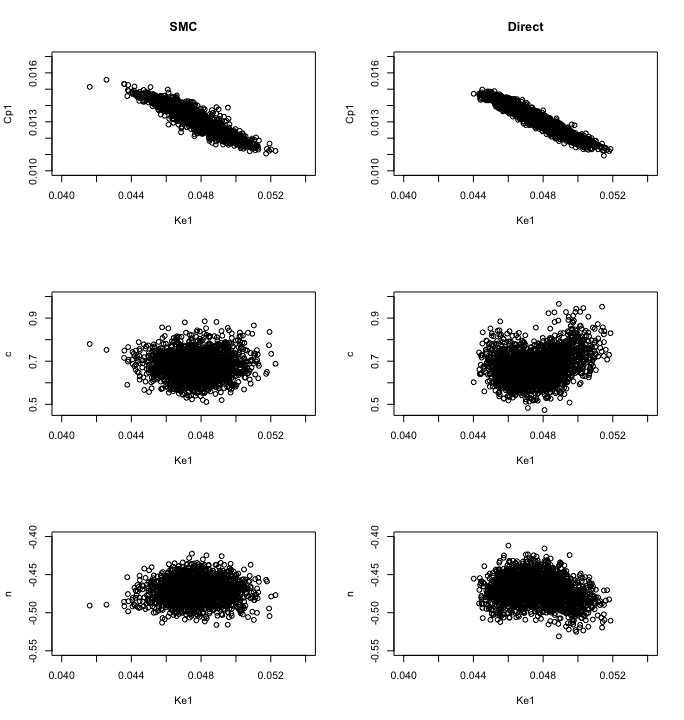}
    \caption{SMC cut posterior samples pairs plot versus direct sampling pairs plot for Ke1 pairs.}
    \label{fig: pairs1}
\end{figure}

\begin{figure}[ht]
    \centering
    \includegraphics[width=0.35\textwidth]{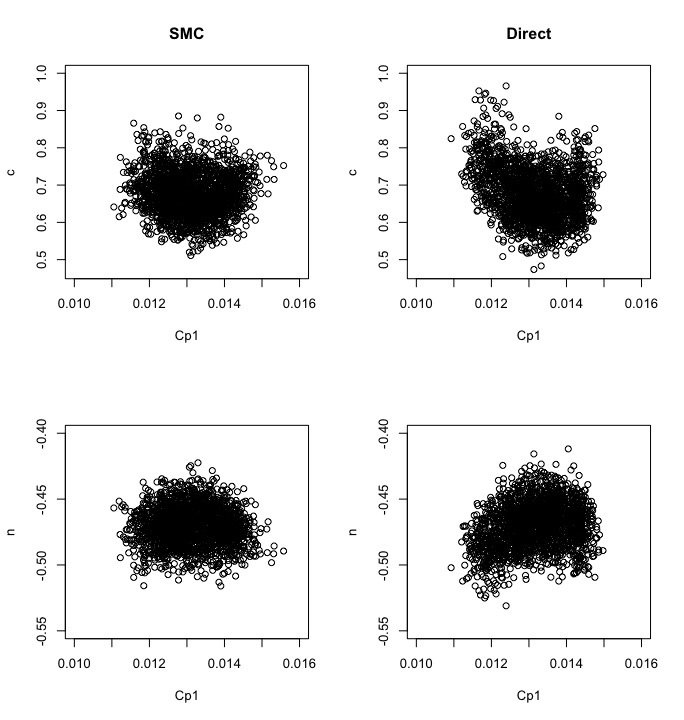}
    \caption{SMC cut posterior samples pairs plot versus direct sampling pairs plot for Cp1 pairs.}
    \label{fig: pairs2}
\end{figure}

\begin{figure}[ht]
    \centering
    \includegraphics[width=0.35\textwidth]{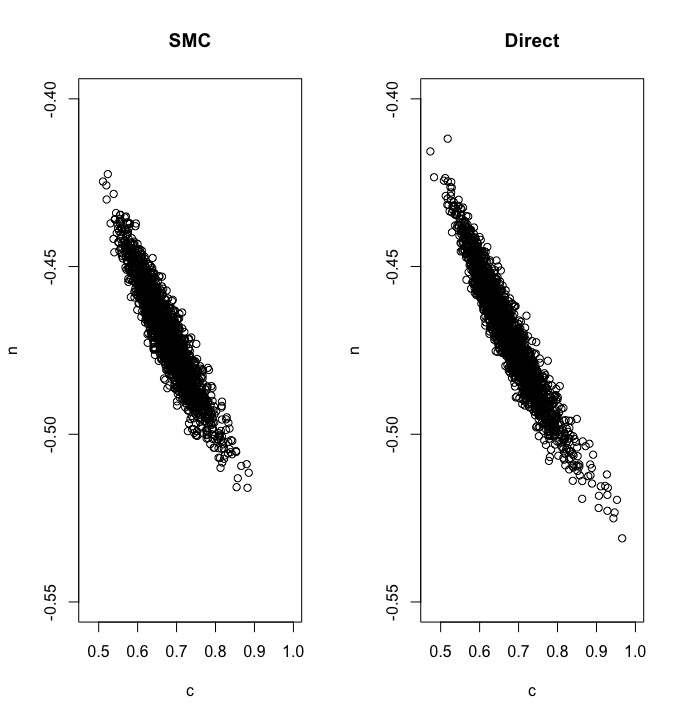}
    \caption{SMC cut posterior samples pairs plot versus direct sampling pairs plot for c,n pair.}
    \label{fig: pairs3}
\end{figure}

\newpage
\section{Non-Gaussian Conditional Posterior} \label{app: non-Gauss}
Here we provide a system which illustrates that conditional posteriors in cut Bayesian modeling can be non Gaussian, particularly with nonlinear functions. We consider the calibration parameters $\theta := (\theta_1, \theta_2) \in \mathbb{R}^2$ and cut parameter $\nu \in \mathbb{R}^1$. The data model $\y|\theta, \nu$ is given by $\mathcal{N}_{2}(f(\theta, \nu), \textrm{diag}(.1,1))$, and uniform, independent priors between $-30$ and $30$ are assumed for the components of $\theta$. Further $f: \mathbb{R}^3 \rightarrow \mathbb{R}^2$ is defined to be:

\begin{align*}
    f(\theta_1, \theta_2, \nu) := (\sin(\theta_1)\cos(\theta_2)\tan(\nu), \theta_1^2 + \theta_2^2 + \nu^2).
\end{align*}

We simulate a single realization of $\y$ according to a true value of $\theta$ equal to (1,2) and $\nu$ equal to 1. The conditional posterior, sampled via slice-within-Gibbs sampling, appears in Figures \ref{fig: nonGauss1} and \ref{fig: nonGauss2} below for values of $\nu = 1$ and $\nu = .3$ respectively, clearly exhibiting non-Gaussian behavior. This example is meant to illustrate that a normal approximation for the conditional posterior distribution is not always tenable. 

\newpage

\begin{figure}[h]
    \centering
    \includegraphics[width=0.45\textwidth]{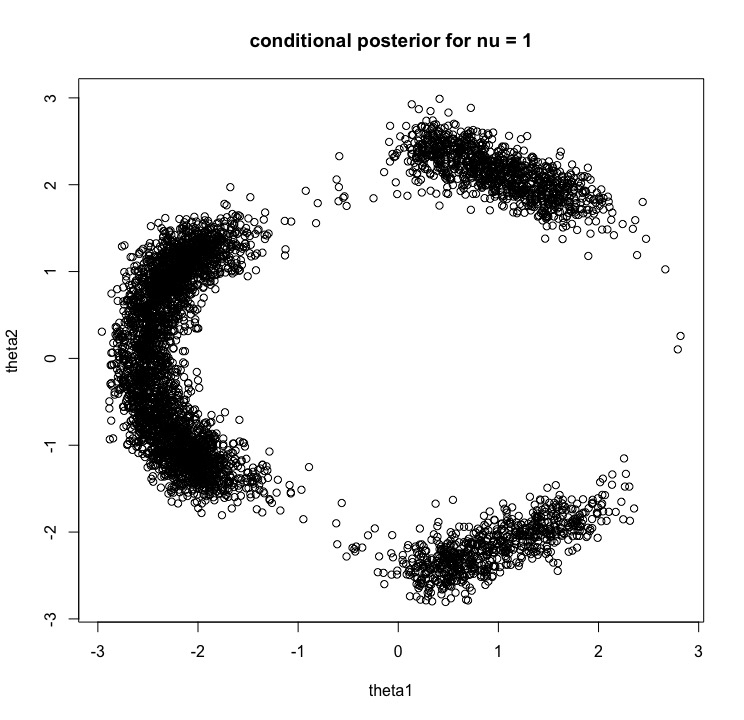}
    \caption{Non-Gaussian conditional posterior for $\nu = 1$.}
    \label{fig: nonGauss1}
\end{figure}

\begin{figure}[h]
    \centering
    \includegraphics[width=0.45\textwidth]{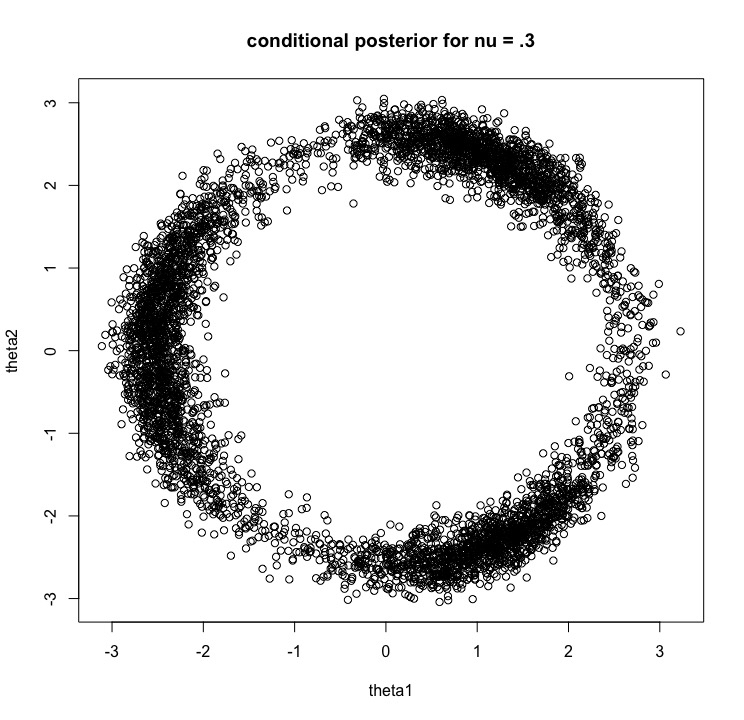}
    \caption{Non-Gaussian conditional posterior for $\nu = .3$.}
    \label{fig: nonGauss2}
\end{figure}

\newpage
\bibliographystyle{apalike}
\bibliography{references}

\end{document}